\newtheorem{theorem}{\bf Theorem}
\newtheorem{lemma}{\bf Lemma}
\newtheorem{definition}{\bf Definition}
\newtheorem{corollary}{\bf Corollary}
\DeclareMathAlphabet{\mathssf}{OT1}{cmss}{m}{sl}
\newcommand{\avgcost}{\overline{\mathcal{J}}}
\newcommand{\cost}{\mathcal{J}}
\newcommand{\m}[1]{\mathbf{#1}^m}
\newcommand{\instvec}{\m{X}_0}
\newcommand{\xonevec}{\m{X}_1}
\newcommand{\lo}[1]{\log_2\left(#1\right)}
\newcommand{\expectp}[2]{\mathbb{E}_{#1}\left[{#2}\right]}
\newcommand{\corr}{{\sigma_{XV}}}
\newcommand{\expect}[1]{\mathbb{E}\left[#1\right]}
\newcommand{\whatmn}[1]{\mathbf{\widehat{#1}}^m}
\newcommand{\wtildemn}[1]{\mathbf{\widetilde{#1}}^m}
\title{Information embedding and the triple role of control}
\author{Pulkit $\text{Grover}^\nabla$, Aaron B. $\text{Wagner}^\ddagger$ and Anant $\text{Sahai}^\dagger $\thanks{$\nabla$ Electrical and Computer Engineering, Carnegie Mellon University. Email: pulkit\;@\;cmu.edu. $\dagger${Electrical Engineering and Computer Sciences, University of California, Berkeley. Email: sahai\;@\;eecs.berkeley.edu}. $\ddagger$ Electrical and Computer Engineering, Cornell University. Email: wagner\;@\;ece.cornell.edu. An abridged version of this paper was  presented at the 2010 Information Theory Workshop (ITW), Cairo.    }}
\begin{document}\maketitle
\begin{abstract}
We consider the problem of information embedding where the encoder modifies a white Gaussian host signal in a power-constrained manner to encode a message, and the decoder recovers both the embedded message and the \textit{modified} host signal. This partially extends the recent work of Sumszyk and Steinberg to the continuous-alphabet Gaussian setting.  Through a control-theoretic lens, we observe that the problem is a minimalist example of what is called the ``triple role'' of control actions. We show that a dirty-paper-coding strategy achieves the optimal rate for perfect recovery of the modified host and the message for any message rate. For imperfect recovery of the modified host, by deriving bounds on the minimum mean-square error (MMSE) in recovering the modified host signal, we show that DPC-based strategies 
 are guaranteed to attain within a uniform constant factor of $16$ of the optimal weighted sum of power required in host signal modification and the MMSE in the modified host signal reconstruction for all weights and all message rates. When specialized to the zero-rate case, our results provide the tightest known lower bounds on the asymptotic costs for the vector version of a famous open problem in decentralized control: the Witsenhausen counterexample. Numerically, this tighter bound helps us characterize the asymptotically optimal costs for the vector Witsenhausen problem to within a factor of $1.3$ for all problem parameters, improving on the earlier best known bound of $2$.
\end{abstract}

\section{Introduction}

\label{sec:intro}

The problem of interest in this paper (see Fig.~\ref{fig:blockdgm}) derives its motivation from both information theory and decentralized control. Information-theoretically, the problem is closely related to information embedding problems studied, for instance, in~\cite{CostaDirtyPaper,SutivongGaussian,KimStateAmplification,KotagiriLaneman,MerhavMasking,KoetterDPC}, and more closely in~\cite{ReversibleStegotext}, etc. These are variations on the work of Gel'fand and Pinsker~\cite{gelfandpinsker}, where the decoder is interested in reconstructing a message embedded in the state by the transmitter. Most of these variations (e.g.~\cite{SutivongGaussian,KimStateAmplification,MerhavMasking,KoetterDPC}) require the communication or hiding of the \textit{original} host signal: for instance, the work of Sutivong \textit{et al}.~\cite{SutivongGaussian,KimStateAmplification} addresses the problem of imperfect reconstruction of the \textit{original} host signal $\m{X}_0$, while simultaneously communicating a message.


\begin{figure}[htb]
\begin{center}
\includegraphics[scale=0.4]{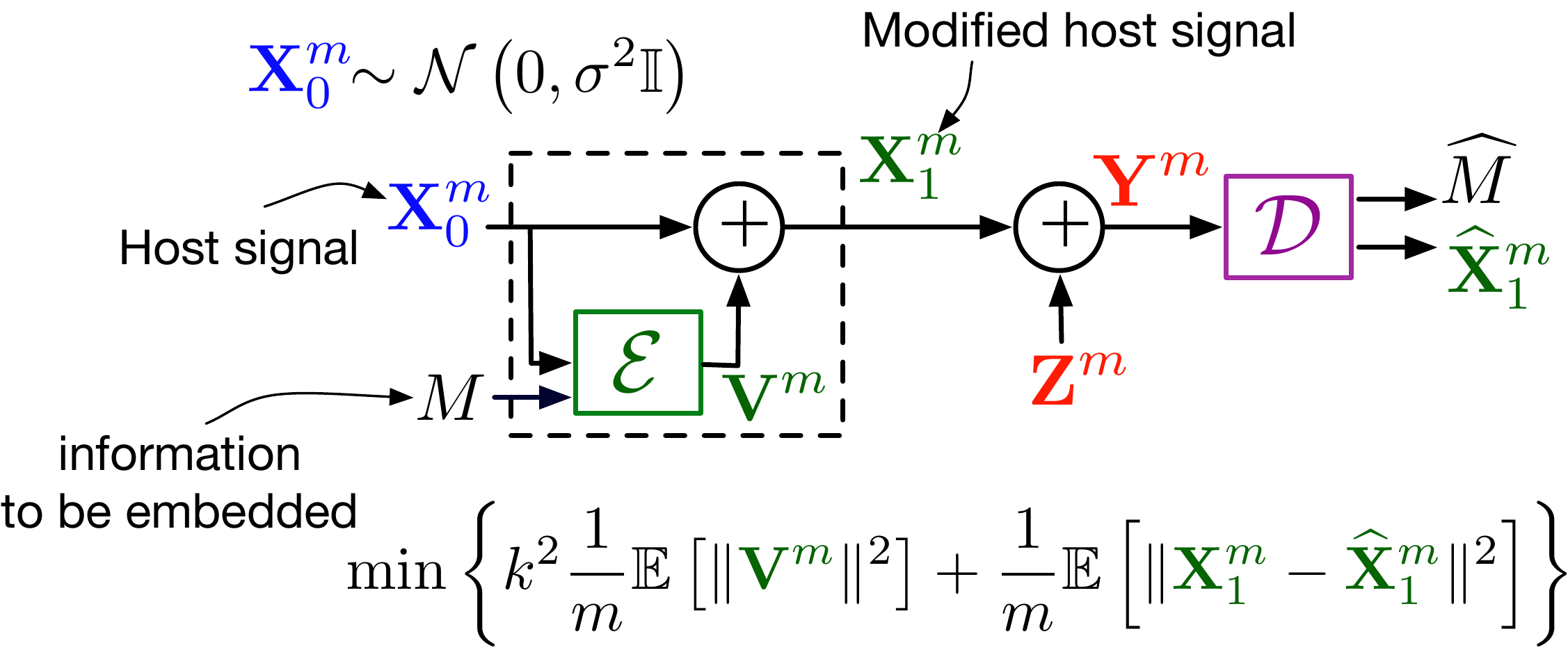}
\caption{The host signal $\instvec$ is first modified by the encoder $\mathcal{E}$ using a power-constrained input $\m{V}$. The modified host signal $\xonevec $ and the message $M$ are then reconstructed at the decoder. The problem is to find the required cost to attain rate $R$, where the cost is a weighted sum of the average power $P$ used to modify the host signal and the average distortion in the reconstruction of $\xonevec $. }
\label{fig:blockdgm}
\end{center}
\end{figure}

In comparison, our problem requires recovery of the \textit{modified} host signal. Our problem is a partial extension of an information-embedding problem considered by Sumszyk and Steinberg~\cite{ReversibleStegotext}. The authors in~\cite{ReversibleStegotext} focus on the finite-alphabet case, and ask the question of the achievable message-rates for the case where the decoder recovers the embedded message as well as the \textit{modified} host signal $\xonevec $ perfectly. Intellectually, the work in~\cite{ReversibleStegotext} is directed towards understanding how a communication problem changes when an additional requirement, that of the encoder being able to produce a copy of the decoder's reconstruction, is imposed on the system (in a source coding context, the issue was explored by Steinberg in~\cite{SteinbergHostRecovery,SteinbergJournal}). 

In this paper, we partially  extend Sumszyk and Steinberg's problem to the Gaussian case\footnote{The authors in~\cite{ReversibleStegotext} allow the channel to be \textit{directly} affected by the original host signal as well as the modified host. In our formulation, the channel is affected by the modified host, but not the original host. This is why our extension is only partial.} (the formal problem statement is in Section~\ref{sec:probstat}). Further, unlike~\cite{ReversibleStegotext}, we allow \textit{imperfect} recovery of the modified host, although our results also characterize the optimal rate in the limit of perfect recovery. Often in infinite-alphabet source-coding problems, perfect reconstruction across finite-capacity channels is impossible. However, it turns out that for this problem of recovering the modified host signal $\m{X}_1$, the reconstruction \textit{can} be perfect. For instance, the encoder can ensure that the modified host signal takes values in a discrete subset of the continuous space. 

When could one be interested in recovering a modified version of the host-signal? Such situations often arise in problems of control, where the agents are interested in estimating the \textit{current} state of the system (post-modification), and may not care about the original state\footnote{Sumszyk and Steinberg's motivation appears to be closely related. They state~\cite{ReversibleStegotext}: ``if the user is interested in decoding the data and further transmitting the stegotext, \textit{i.e}, the result of the embedding of independent data into the host, then full decoding of the host is not needed, only re-production of the stegotext.''}. A simple decentralized-control example of this feature is the celebrated ``counterexample'' of Witsenhausen~\cite{Witsenhausen68}, shown in Fig.~\ref{fig:Wcounter}(a). Formulated in 1968, the goal in the scalar problem is to minimize a weighted sum\footnote{It is shown in~\cite{WitsenhausenJournal} that obtaining the optimal weighted sum cost for all weights is equivalent to obtaining the tradeoff between the power and distortion costs. } of power required to modify the original host signal $X_0$, and the mean-square error in reconstructing the modified host $X_1$ (there is no explicit message to communicate in the problem). The problem shows that the ``certainty-equivalence'' doctrine, which has proven extremely useful in centralized control~\cite{CertaintyEquivalence}, breaks down when applied to decentralized control. This doctrine suggests a separation of estimation and control\footnote{This separation is conceptually analogous to the source-channel separation theorem in information theory~\cite{ShannonOriginalPaper}, which also breaks down on increased decentralization, \textit{i.e.} in large networks~\cite{CoverElGamalSalehi}.}, which can be proven to be optimal in many centralized cases~\cite{WitsenhausenPatterns}. The separation fails to hold for Witsenhausen's counterexample (which is a decentralized control problem) because besides reducing immediate control costs, the control actions can also improve the \textit{estimability} of the modified host signal $\m{X}_1$. This is often called the ``dual role'' of control actions in control theory~\cite{BarShalomTse}.

\begin{figure}[htbp] 
   \centering
   \includegraphics[width=4.7in]{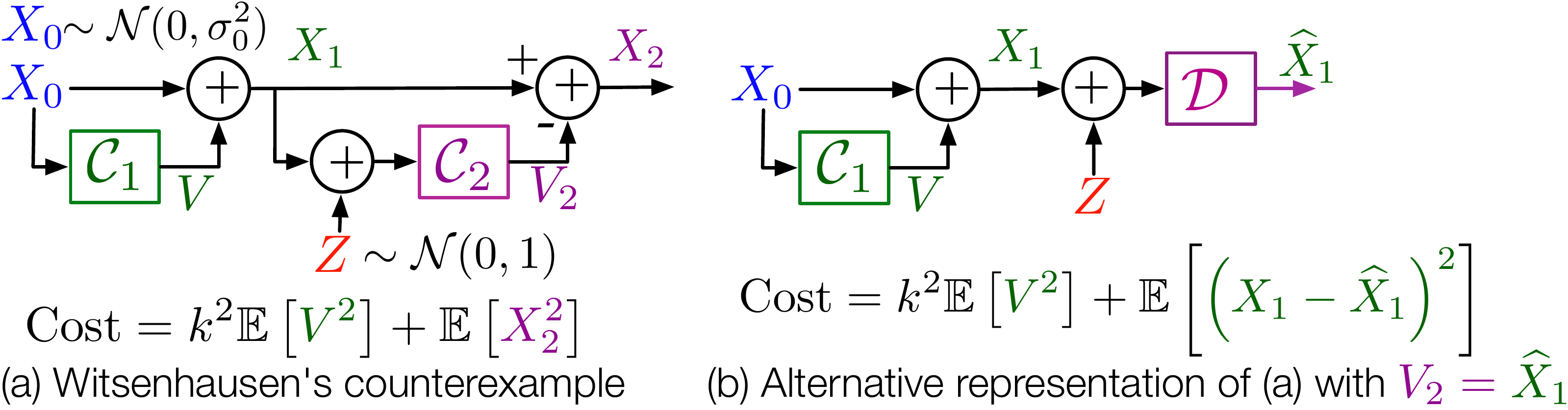} 
   \caption{From Witsenhausen's counterexample to a scalar version of the information-embedding problem. The problem of finding optimal costs for the vector version of (b) is the same as that of finding the optimal tradeoff between the power and MMSE costs. }
   \label{fig:Wcounter}
\end{figure}

In our problem, the encoder's actions need to balance \emph{three} roles:
(1) to minimize the transmitted power, (2) to make it easier for the decoder
to estimate the host signal, and (3) to communicate with the decoder.
The tension between these same three tasks also arises in team
decision theory~\cite{HoTeamDecision}. There agents wish to choose their actions in order
to (1) minimize some cost function, (2) help other agents to
estimate a hidden state and (3) send messages to other agents to
help them all to coordinate their actions. In the stochastic control
literature, this is sometimes called the ``triple role of control''~\cite{HoTeamDecision}.
Multiple roles of control actions have long been of
interest in control theory (e.g.~\cite{BarShalomTse,VaraiyaBook,Feldbaum}), and have recently
been investigated in information theory by Cuff and Zhao~\cite{CuffITW11}.

In Section~\ref{sec:main}, we characterize the optimal control costs to within a factor of $16$ for \textit{all message rates} uniformly over all problem parameters. Our technical contribution is two-fold: in achievability and converse. From an achievability perspective, our dirty-paper coding-based strategy requires recovering the auxiliary random variable along with the message-bin it lies in (unlike in~\cite{gelfandpinsker,CostaDirtyPaper} and related works, where only the message-bin needs to be recovered). This introduces a dependence between non-transmitted codewords and channel outputs that requires us to go beyond the standard typicality-based techniques in related literature\footnote{We are extremely grateful to the anonymous reviewer who pointed out this difficulty to us. While it has been acknowledged and addressed in some settings in information-theoretic literature (e.g.~\cite{LapidothTinguely,AaronKelly}), it has been ignored in others, e.g.~\cite{ReversibleStegotext,KimStateAmplification,KotagiriLaneman} and our own work~\cite{WitsenhausenJournal}. In comparison with~\cite{LapidothTinguely}, our proofs are significantly shorter, and in comparison with~\cite{AaronKelly}, which focuses on finite-alphabet channels, our proofs work directly for the infinite-alphabet Gaussian case. }.

From a converse perspective, our results tighten the known bounds on the Witsenhausen problem. The new lower bound in this paper specialized to zero-rate case provides an improved lower bound to the costs of the vector Witsenhausen counterexample in the long-blocklength limit. Using this improved bound, we numerically show that the ratio of upper and lower bounds is smaller than $1.3$ regardless of the choice of the weights and the problem parameters. This is tighter than the previously best known (numerical) bound of~\cite{WitsenhausenJournal}, which provided a larger ratio of 2.

Interestingly, we show in Theorem~\ref{thm:match}  that the \textit{optimal} strategy for asymptotically perfect reconstruction for communication at any rate $R$ is a dirty-paper-coding-based strategy. In comparison, the optimal strategies for many seemingly-similar problems (e.g.~\cite{KimStateAmplification,SutivongGaussian,MerhavMasking}) that recover the original host signal (as opposed to the modified host signal) in the zero-rate case are linear. That dirty-paper coding is needed for our problem even in the zero-rate case suggests that a deeper study is required to understand the intellectual difference between problems that recover the original host vis-a-vis problems that recover the modified host.

\section{Problem Statement}
\label{sec:probstat}
The host signal $\instvec$ is distributed $\mathcal{N}(0,\sigma^2\mathbb{I})$, and the message $M$ is independent of $\instvec$ and distributed uniformly over $\{1,2,\ldots,2^{mR}\}$. The encoder uses map $\mathcal{E}_m$ to map $(M,\instvec)$ to $\xonevec $ by additively distorting $\instvec$ using input $\m{V}$ of average power (for each message) at most $P$, i.e. $\expect{\|\instvec-\xonevec \|^2}\leq mP$. Additive white Gaussian noise $\m{Z}\sim\mathcal{N}(0,\sigma_z^2\mathbb{I})$, where $\sigma_z^2=1$, is added to $\xonevec $ by the channel. The decoder $\mathcal{D}_m$ maps the channel output $\m{Y}$ to both an estimate $\m{\widehat{X}}_1$ of the modified host signal $\xonevec $ and an estimate $\widehat{M}$ of the message. 

Define the error probability $\epsilon_m(\mathcal{E}_m,\mathcal{D}_m)=\Pr(M\neq \widehat{M})$. Define the distortion function $d(x,y)=(x-y)^2$ for $x,y\in\mathbb{R}$, and for $\m{x},\m{y}\in\mathbb{R}^m$, $d(\m{x},\m{y})=\frac{1}{m}\|\m{x}-\m{y}\|^2$.  For the map sequence $\{\mathcal{E}_m,\mathcal{D}_m\}_{m=1}^\infty$, define the minimum asymptotic distortion $\mathrm{MMSE}(P,R)$ as follows
\begin{eqnarray*}
\mathrm{MMSE}(P,R) &:=& \underset{\{\mathcal{E}_m,\mathcal{D}_m\}_{m=1}^\infty : \epsilon_m(\mathcal{E}_m,\mathcal{D}_m)\rightarrow 0}{\inf} \;\underset{m\rightarrow\infty}{\lim \sup}\; \expect{d(\m{X}_1,\m{\widehat{X}}_1)}\\
&=& \underset{\{\mathcal{E}_m,\mathcal{D}_m\}_{m=1}^\infty : \epsilon_m(\mathcal{E}_m,\mathcal{D}_m)\rightarrow 0}{\inf} \;\underset{m\rightarrow\infty}{\lim \sup}\; \frac{1}{m}\expect{\|\xonevec -\m{\widehat{X}}_1\|^2}.
\end{eqnarray*}
Our results focus on the tradeoff between the rate $R$, the power $P$, and $\mathrm{MMSE}(P,R)$.

We will also use our results to obtain bounds on the conventional control-theoretic weighted cost~\cite{Witsenhausen68}:
\begin{equation}
\cost = \frac{1}{m}k^2 \|\m{V}\|^2 + \frac{1}{m}\|\xonevec -\m{\widehat{X}}_1\|^2,
\end{equation}
where $k\in\mathbb{R}^+$. The objective is to minimize the average cost, $\avgcost:=\expect{J}$ at rate $R$. The optimal average cost is denoted by $\avgcost_{\text{opt}}$. The average is taken over the realizations of the host signal, the channel noise, and the message. At $R=0$, the problem is the vector Witsenhausen counterexample~\cite{WitsenhausenJournal}. 

\section{Main Results}
\label{sec:main}
\subsection{Lower bounds on $\mathrm{MMSE}(P,R)$}


%
%
%
%
%
%
%
%



\begin{theorem}
\label{thm:newlower}
For the problem as stated in Section~\ref{sec:probstat}, for communicating reliably at rate $R$ with input power $P$, the asymptotic average mean-square error in recovering $\xonevec $ is lower bounded as follows. For $P\geq 2^{2R}-1$,
\begin{equation}
\label{eq:newLowerWithRateThm}
\mathrm{MMSE}(P,R)\geq \inf_{\corr}\sup_{\gamma\in\mathbb{R}} \frac{1}{\gamma^2}\left(\left( \sqrt{\frac{\sigma^2 2^{2R}}{1+\sigma^2+P+2\corr}}  -   \sqrt{(1-\gamma)^2\sigma^2 + \gamma^2 P -2\gamma (1-\gamma)\corr}     \right)^+\right)^2,
\end{equation}
where $\max\left\{-\sigma\sqrt{P}, \frac{2^{2R}-1-P -\sigma^2}{2}\right\}   \leq\corr\leq\sigma\sqrt{P}$ is the range of $\sigma_{XV}$ over which the above infimum is being taken. For $P<2^{2R}-1$, reliable communication at rate $R$ is not possible. 
\end{theorem}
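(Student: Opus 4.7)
My plan combines three ingredients: a dirty-paper-style channel converse bounding $I(M;\m{Y})$, a conditional Gaussian maximum-entropy argument lower-bounding the MSE in reconstructing $\instvec$ from $(M,\m{Y})$, and an $L^2$ triangle inequality that converts this into the desired lower bound on the MSE in reconstructing $\xonevec$. Throughout, fix a sequence $\{\mathcal{E}_m,\mathcal{D}_m\}$ with $\epsilon_m\to 0$ and set $\sigma_{XV}:=\tfrac{1}{m}\expect{\langle \instvec,\m{V}\rangle}$.

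The infeasibility claim $P<2^{2R}-1$ follows from the standard Costa-style computation: since $M\perp\instvec$ and $M\to\m{V}\to\m{Y}$ is Markov given $\instvec$, Fano's inequality together with
\begin{equation*}
I(M;\m{Y}) \leq I(M;\m{Y}|\instvec) \leq I(\m{V};\m{Y}|\instvec) = h(\m{V}+\m{Z}|\instvec) - h(\m{Z}) \leq \tfrac{m}{2}\lo{1+P}
\end{equation*}
forces $R\leq \tfrac{1}{2}\lo{1+P}$. Now assume $P\geq 2^{2R}-1$. Using the Markov chain $(\instvec,M)\to\xonevec\to\m{Y}$ and the Gaussian capacity bound $I(\xonevec;\m{Y})\leq\tfrac{m}{2}\lo{1+\sigma^2+P+2\sigma_{XV}}$,
\begin{equation*}
I(\instvec;\m{Y}|M) = I(\instvec,M;\m{Y}) - I(M;\m{Y}) \leq \tfrac{m}{2}\lo{1+\sigma^2+P+2\sigma_{XV}} - mR + o(m).
\end{equation*}
Because $\instvec\perp M$, $h(\instvec|M,\m{Y})=h(\instvec)-I(\instvec;\m{Y}|M)$, so applying the Gaussian maximum-entropy inequality to the $(M,\m{Y})$-measurable shift $\instvec - \m{\widehat{X}}_0$ yields, for \emph{any} $(M,\m{Y})$-measurable estimator $\m{\widehat{X}}_0$ of $\instvec$,
\begin{equation*}
\tfrac{1}{m}\expect{\|\instvec - \m{\widehat{X}}_0\|^2} \;\geq\; \frac{\sigma^2\, 2^{2R}}{1+\sigma^2+P+2\sigma_{XV}} - o(1) \;=:\; A^2 - o(1).
\end{equation*}

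Finally, specialize to $\m{\widehat{X}}_0=\gamma\m{\widehat{X}}_1$ for arbitrary $\gamma\in\mathbb{R}$ and invoke the $L^2$ triangle inequality:
\begin{equation*}
\sqrt{A^2 - o(1)} \;\leq\; \sqrt{\tfrac{1}{m}\expect{\|\instvec - \gamma\xonevec\|^2}} + |\gamma|\sqrt{\tfrac{1}{m}\expect{\|\xonevec - \m{\widehat{X}}_1\|^2}}.
\end{equation*}
Expanding $\instvec-\gamma\xonevec=(1-\gamma)\instvec-\gamma\m{V}$ together with $\expect{\|\m{V}\|^2}\leq mP$ bounds the first right-hand term by $B := \sqrt{(1-\gamma)^2\sigma^2 + \gamma^2 P - 2\gamma(1-\gamma)\sigma_{XV}}$. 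Isolating the reconstruction MSE, taking $\limsup_m$ to kill the $o(1)$, and then $\sup_\gamma$ and $\inf_{\sigma_{XV}}$ over the feasible range (dictated by Cauchy--Schwarz $|\sigma_{XV}|\leq\sigma\sqrt{P}$ and by the capacity requirement $1+\sigma^2+P+2\sigma_{XV}\geq 2^{2R}$ needed for $I(\xonevec;\m{Y})\geq mR$) delivers the theorem. The most delicate step is the last limit: since $\sigma_{XV}$ is a property of the chosen encoder rather than a free parameter, one must pass to a subsequence along which $\sigma_{XV,m}$ converges in order to convert the per-encoder lower bound into the claimed universal one; beyond this bookkeeping I anticipate no real obstacle.
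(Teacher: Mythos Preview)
Your argument is correct and reaches the same bound as the paper, but the route is organized differently enough to be worth comparing. The paper first single-letterizes via a time-sharing variable $Q$, develops the $R=0$ case separately, and works through the decoder outputs: it uses the chain $(M,\instvec)\to\xonevec\to\m{Y}\to(\widehat{M},\m{\widehat{X}}_1)$ to obtain $m(R-\delta_m)+mI(X_0;\widehat{X}_1)\le mI(X_1;Y)$, and then lower-bounds $I(X_0;\widehat{X}_1)$ via $h(X_0)-h(X_0-\gamma\widehat{X}_1)$ together with the second-moment inequality $\expect{(A+B)^2}\le(\sqrt{\expect{A^2}}+\sqrt{\expect{B^2}})^2$. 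You instead stay with vectors throughout, condition on $(M,\m{Y})$ rather than on the decoder's estimates, and invoke the Gaussian maximum-entropy bound on $h(\instvec|M,\m{Y})$ directly; your $L^2$ triangle inequality is exactly the vector form of the paper's Cauchy--Schwarz step. The two derivations are algebraically equivalent at the level of the final inequality. Your version is shorter and sidesteps the single-letterization machinery; the paper's route, on the other hand, makes it transparent that in the $R=0$ case the bound already holds at every finite blocklength $m$ (their Corollary), a fact your presentation could also claim but does not emphasize. The subsequence bookkeeping you flag for the $m$-dependent $\sigma_{XV}$ is indeed the only point requiring care, and it is routine.
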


\begin{corollary}\label{coro:wit}
For the vector Witsenhausen problem (\textit{i.e.} $R=0$ case) with input power $P$, the following is a lower bound on the $\mathrm{MMSE}$ in the estimation of $\xonevec $:
\begin{equation}
\label{eq:newLower}
\mathrm{MMSE}(P,0)\geq \inf_{\corr}\sup_{\gamma\in\mathbb{R}} \frac{1}{\gamma^2}\left(\left( \sqrt{\frac{\sigma^2}{1+\sigma^2+P+2\corr}}  -   \sqrt{(1-\gamma)^2\sigma^2 + \gamma^2 P -2\gamma (1-\gamma)\corr}     \right)^+\right)^2,
\end{equation}
where $\corr\in [-\sigma\sqrt{P},\sigma\sqrt{P}]$. Further, this lower bound on $\expect{d(\m{X}_1,\whatmn{X}_1)}$ holds for all $m$, and not just asymptotically.
\end{corollary}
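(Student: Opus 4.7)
The proof rests on two ingredients: a Shannon-type lower bound on the MMSE for estimating the \emph{original} host $\instvec$ from the channel output $\m{Y}$, and an $L^2$ triangle inequality that transfers this into a bound on the MMSE for the \emph{modified} host $\xonevec$.

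First, since the Gaussian maximises differential entropy under a second-moment constraint, for any measurable $g$,
\begin{equation*}
\frac{1}{m}\expect{\|\instvec - g(\m{Y})\|^2} \;\geq\; \frac{1}{2\pi e}\, 2^{2 h(\instvec \mid \m{Y})/m} \;=\; \sigma^2 \cdot 2^{-2 I(\instvec;\m{Y})/m}.
\end{equation*}
I would then bound $I(\instvec;\m{Y}) = h(\m{Y}) - h(\m{Y}\mid\instvec)$ from above. The first term is at most $\frac{m}{2}\log(2\pi e(1+\sigma^2+P+2\corr))$, since $\m{Y}=\instvec+\m{V}+\m{Z}$ has per-letter second moment $1+\sigma^2+P+2\corr$. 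For the second, chain-ruling in $M$, $h(\m{Y}\mid\instvec)=h(\m{Z})+I(M;\m{Y}\mid\instvec)$, because $\m{V}$ is deterministic given $(\instvec,M)$ and $\m{Z}\perp(\instvec,\m{V},M)$. Using $M\perp\instvec$ and Fano's inequality, $I(M;\m{Y}\mid\instvec)\geq I(M;\m{Y})\geq mR(1-\epsilon_m)-1$. Combining,
\begin{equation*}
\liminf_{m\to\infty} \frac{1}{m}\expect{\|\instvec - g(\m{Y})\|^2} \;\geq\; \frac{\sigma^2\, 2^{2R}}{1+\sigma^2+P+2\corr}.
\end{equation*}

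Second, for every $\gamma\in\mathbb{R}$ the elementary identity
\begin{equation*}
\instvec - \gamma\whatmn{X}_1 \;=\; \gamma(\xonevec-\whatmn{X}_1) + \m{W}, \qquad \m{W}:=(1-\gamma)\instvec - \gamma\m{V},
\end{equation*}
gives $\frac{1}{m}\expect{\|\m{W}\|^2} = (1-\gamma)^2\sigma^2 + \gamma^2 P - 2\gamma(1-\gamma)\corr$, so Minkowski's inequality yields
\begin{equation*}
\sqrt{\tfrac{1}{m}\expect{\|\instvec - \gamma\whatmn{X}_1\|^2}} \;\leq\; |\gamma|\sqrt{\tfrac{1}{m}\expect{\|\xonevec-\whatmn{X}_1\|^2}} + \sqrt{(1-\gamma)^2\sigma^2 + \gamma^2 P - 2\gamma(1-\gamma)\corr}.
\end{equation*}
Specialising the first inequality to $g(\m{Y})=\gamma\whatmn{X}_1$, rearranging, taking the positive part, squaring, and dividing by $\gamma^2$ produces precisely the expression inside the supremum in \eqref{eq:newLowerWithRateThm}. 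I then optimise over the free parameter $\gamma\in\mathbb{R}$, and take the infimum over $\corr$, which is determined by the (unknown) encoder.

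The admissible range of $\corr$ follows from Cauchy--Schwarz ($|\corr|\leq \sigma\sqrt{P}$) and from the fact that $I(\instvec;\m{Y})\geq 0$ in Step~1 forces $1+\sigma^2+P+2\corr\geq 2^{2R}$ in the limit, i.e.\ $\corr\geq(2^{2R}-1-\sigma^2-P)/2$. The impossibility claim for $P<2^{2R}-1$ is handled separately via $I(M;\m{Y})\leq I(M;\m{Y}\mid\instvec) = h(\m{V}+\m{Z}\mid\instvec)-h(\m{Z}) \leq \tfrac{m}{2}\log(1+P)$, using $\m{V}\perp\m{Z}$ and Gaussian-maximises-entropy, combined with Fano's inequality. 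For Corollary~\ref{coro:wit} ($R=0$) the Fano step is vacuous and the chain in Step~1 holds for every finite $m$, yielding the claimed non-asymptotic statement. The main obstacle I anticipate is bookkeeping: handling the positive part carefully (the bound is vacuous in parts of the parameter space), and being precise in applying Gaussian-maximises-entropy despite $\m{V}$ being potentially non-Gaussian and correlated with $\instvec$.
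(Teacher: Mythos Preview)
Your proof is correct and follows essentially the same approach as the paper: both arguments hinge on the decomposition $\instvec-\gamma\whatmn{X}_1=\gamma(\xonevec-\whatmn{X}_1)+(1-\gamma)\instvec-\gamma\m{V}$, an entropy/Shannon lower bound on the error in estimating $\instvec$, and Cauchy--Schwarz/Minkowski to couple the two MSEs. The only notable difference is packaging: the paper single-letterizes via a time-sharing variable and invokes the data-processing inequality $I(\instvec;\whatmn{X}_1)\le I(\xonevec;\m{Y})$ before bounding $I(X_1;Y)$, whereas you bound $I(\instvec;\m{Y})$ directly in $m$-letter form using $h(\m{Y}\mid\instvec)=h(\m{Z})$ (for $R=0$), which is slightly more streamlined but leads to the identical inequality.
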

\begin{proof}\textit{[Of Theorem~\ref{thm:newlower}]}
For conceptual clarity, we first derive the result for the case $R=0$ (Corollary~\ref{coro:wit}). The tools developed are then used to derive the lower bound for $R> 0$. 
 
\begin{proof}\textit{[Of Corollary~\ref{coro:wit}]}

For any chosen pair of control maps $\mathcal{E}_m$ and $\mathcal{D}_m$, there is a Markov chain $\instvec\rightarrow \xonevec \rightarrow \m{Y}\rightarrow \m{\widehat{X}}_1$. Using the data-processing inequality
\begin{equation}
\label{eq:dpi}
I(\instvec;\m{\widehat{X}}_1) \leq I(\xonevec ;\m{Y}).
\end{equation}
The terms in the inequality can be bounded by single letter expressions as follows. Define $Q$ as a random variable uniformly distributed over $\{1,2,\ldots,m\}$. Define $X_0=X_{0,Q}$, $V=V_Q$, $X_1=X_{1,Q}$, $Z=Z_Q$, $Y=Y_Q$ and $\widehat{X}_1=\widehat{X}_{1,Q}$. Then, 
\begin{eqnarray}
I(\xonevec ;\m{Y}) &=& h(\m{Y}) - h(\m{Y}|\xonevec )\nonumber\\
&\overset{(a)}{\leq}& \sum_ih(Y_i) - h(\m{Y}|\xonevec )= \sum_ih(Y_i) - h(Y_i|X_{1,i})= \sum_iI(X_{1,i};Y_i)= m I(X_1;Y|Q)\nonumber\\
&= & m\left( h(Y|Q) - h(Y|X_1,Q)\right)\leq  m\left( h(Y) - h(Y|X_1,Q)\right)\nonumber\\
&\overset{(b)}{=} &m\left( h(Y) - h(Y|X_1)\right) = m I(X_1;Y),
\label{eq:ixy}
\end{eqnarray}
where $(a)$ follows from an application of the chain-rule for entropy followed by using the fact that conditioning reduces entropy, and $(b)$ follows from the observation that the additive noise $Z_i$ is iid across time, and independent of the input $X_{1,i}$ (thus $Y$ is conditionally independent of  $Q$ given $X_1$). Also, 
\begin{eqnarray}
I(\instvec;\m{\widehat{X}}_1) &=& h(\instvec) - h(\instvec|\m{\widehat{X}}_1) =  \sum_i h(X_{0,i}) - h(\instvec|\m{\widehat{X}}_1) \nonumber\\
& \overset{(a)}{\geq} &  \sum_i \left(h(X_{0,i}) - h(X_{0,i}|\widehat{X}_{1,i})\right) \nonumber\\
& = &  \sum_i I(X_{0,i};\widehat{X}_{1,i}) = m I(X_0;\widehat{X}_1|Q)\nonumber= m\left(h(X_0|Q) - h(X_0|\widehat{X}_1,Q)\right)\nonumber\\
& \overset{(b)}{\geq} & m\left( h(X_0) - h(X_0|\widehat{X}_1)\right) = mI(X_0;\widehat{X}_1),
\label{eq:isx}
\end{eqnarray}
where $(a)$ and $(b)$ again follow from the fact that conditioning reduces entropy, and $(b)$ also uses the observation that since $X_{0,i}$ are iid, $X_0$, $X_{0,i}$, and $X_0|Q=q$ are distributed identically.

Now, using~\eqref{eq:dpi},~\eqref{eq:ixy} and~\eqref{eq:isx},
\begin{equation}
\label{eq:bigineq}
mI(X_0;\widehat{X}_1)\leq I(\instvec;\m{\widehat{X}}_1) \leq I(\xonevec ;\m{Y}) \leq mI(X_1;Y).
\end{equation}
Also observe that from the definitions of $X_1$, $\widehat{X}_1$ and $Y$, $\expect{d(\instvec,\xonevec )}=\expectp{\m{X}_0}{\frac{1}{m}\sum_{i=1}^md(X_{0,i},X_{1,i})}=\expectp{\m{X}_0}{\expectp{Q}{d(X_0,X_1)}|\m{X}_0}=\expect{d(X_0,X_1)}$, and similarly, $\expect{d(\xonevec ,\m{\widehat{X}}_1)}=\expect{d(X_1,\widehat{X}_1)}$. 

Using the Cauchy-Schwartz inequality, the correlation $\corr=\expect{X_0V}$  must satisfy the following constraint,
\begin{equation}
\label{eq:sigmasu}
|\corr|= |\expect{X_0V}|\leq \sqrt{\expect{X_0^2}}\sqrt{\expect{V^2}}\leq \sigma\sqrt{P}.
\end{equation}
Also,
\begin{equation}
\label{eq:xpow}
\expect{X_1^2}=\expect{(X_0+V)^2}= \sigma^2 + P + 2\sigma_{XV}.
\end{equation} 
Since $Z=Y-X_1$ is independent of $X_1$, and a Gaussian input distribution maximizes the mutual information across an average-power-constrained AWGN channel,
\begin{equation}
\label{eq:ixycompute}
I(X_1;Y)\leq \frac{1}{2}\lo{1+\frac{P+\sigma^2+2\corr}{1}}.
\end{equation}
\begin{eqnarray}
I(X_0;\widehat{X}_1) &=& h(X_0) - h(X_0|\widehat{X}_1)\nonumber\\
& = & h(X_0) - h(X_0-\gamma \widehat{X}_1|\widehat{X}_1)\;\forall \gamma \nonumber\\
& \overset{(a)}{\geq} & h(X_0) - h(X_0-\gamma \widehat{X}_1)\nonumber\\
& = & \frac{1}{2}\lo{2\pi e\sigma^2} - h(X_0-\gamma \widehat{X}_1),
\label{eq:lossy}
\end{eqnarray}
where $(a)$ follows from the fact that conditioning reduces entropy. Also note here that the result holds for any $\gamma$, and in particular, $\gamma$ can depend on $\corr$. Now, 
\begin{eqnarray}
\label{eq:sminusxhat}
 h(X_0-\gamma\widehat{X}_1) &= & h(X_0-\gamma(\widehat{X}_1-X_1) -\gamma X_1)= \textcolor{red}{h\left(X_0 -\gamma (\widehat{X}_1-X_1)- \gamma X_0 -\gamma V \right)}\nonumber \\
& = & h\left( (1-\gamma)X_0-\gamma V -\gamma (\widehat{X}_1-X_1) \right).
\end{eqnarray}
The second moment of a sum of two random variables $A$ and $B$ can be bounded as follows
\begin{eqnarray}
\label{eq:aplusbsq}
\expect{(A+B)^2} & = & \expect{A^2}+\expect{B^2}+2\expect{AB}\nonumber\\
& \overset{\text{Cauchy-Schwartz ineq.}}{\leq} & \expect{A^2}+\expect{B^2}+2\sqrt{\expect{A^2}}\sqrt{\expect{B^2}}\nonumber\\
& = & (\sqrt{\expect{A^2}}  + \sqrt{\expect{B^2}})^2,
\end{eqnarray}
with equality when $A$ and $B$ are aligned, i.e. $A=\lambda B$ for some $\lambda\in \mathbb{R}$. For the random variables under consideration in~\eqref{eq:sminusxhat}, choosing $A = (1-\gamma)X_0-\gamma V$, and $B = -\gamma (\widehat{X}_1-X_1)$ in~\eqref{eq:aplusbsq}
\begin{eqnarray}
\label{eq:align}
&&\nonumber\expect{\left((1-\gamma)X_0 - \gamma V -\gamma (\widehat{X}_1-X_1) \right)^2 } \\
&&\leq \left(\sqrt{(1-\gamma)^2\sigma^2 + \gamma^2 P -2\gamma (1-\gamma)\corr} + |\gamma| \sqrt{\expect{(\widehat{X}_1-X_1)^2}}\right)^2.
\end{eqnarray}
Equality is obtained by aligning\footnote{In general, since $\m{\widehat{X}}_1$ is a function of $\m{Y}$, this alignment is not actually possible when the recovery of $\xonevec $ is not exact. This is one reason why the derived bound is loose for inexact reconstruction.} $X_1-\widehat{X}_1$ with $(1-\gamma)X_0 - \gamma V$. Thus, from~\eqref{eq:lossy},~\eqref{eq:sminusxhat}, and~\eqref{eq:align},
\begin{eqnarray}
I(X_0;\widehat{X}_1) &\geq &\frac{1}{2}\lo{2\pi e\sigma^2} - h(X_0-\gamma\widehat{X}_1)\nonumber\\
&\geq & \frac{1}{2}\lo{\frac{\sigma^2}{\left(\sqrt{(1-\gamma)^2\sigma^2 + \gamma^2 P -2\gamma (1-\gamma)\corr} + |\gamma| \sqrt{\expect{(\widehat{X}_1-X_1)^2}}\right)^2}}.
\label{eq:isxhatcompute}
\end{eqnarray}
From~\eqref{eq:bigineq}, $I(X_0;\widehat{X}_1)\leq I(X_1;Y)$. Using the lower bound on $I(X_0;\widehat{X}_1)$ from~\eqref{eq:isxhatcompute} and the upper bound on $I(X_1;Y)$ from~\eqref{eq:ixycompute}, we get
$$
\frac{1}{2}\lo{\frac{\sigma^2}{\left(\sqrt{(1-\gamma)^2\sigma^2 + \gamma^2 P -2\gamma (1-\gamma)\corr} + |\gamma| \sqrt{\expect{(\widehat{X}_1-X_1)^2}}\right)^2}} \leq \frac{1}{2}\lo{1+\frac{P+\sigma^2+2\corr}{1}},
$$
for the choice of $\mathcal{E}_m$ and $\mathcal{D}_m$. Since $\lo{\cdot{}}$ is a monotonically increasing function, 
\begin{eqnarray*}
\frac{\sigma^2}{\left(\sqrt{(1-\gamma)^2\sigma^2 + \gamma^2 P -2\gamma (1-\gamma)\corr} + |\gamma| \sqrt{\expect{(\widehat{X}_1-X_1)^2}}\right)^2}\leq 1+P+\sigma^2+2\corr\\
i.e.,\;\;\left(\sqrt{(1-\gamma)^2\sigma^2 + \gamma^2 P -2\gamma (1-\gamma)\corr} + |\gamma| \sqrt{\expect{(\widehat{X}_1-X_1)^2}}\right)^2\geq \frac{\sigma^2}{1+P+\sigma^2+2\corr},\\
i.e.,\;\;|\gamma| \sqrt{\expect{(\widehat{X}_1-X_1)^2}}\geq \sqrt{\frac{\sigma^2}{1+P+\sigma^2+2\corr}}- \sqrt{(1-\gamma)^2\sigma^2 + \gamma^2 P -2\gamma (1-\gamma)\corr}. 
\end{eqnarray*}
Because the RHS may not be positive, we take the maximum of zero and the RHS and obtain the following lower bound for $\mathcal{E}_m$ and $\mathcal{D}_m$. 
\begin{equation}
\expect{(\widehat{X}_1-X_1)^2}\geq  \frac{1}{\gamma^2}\left(\left(\sqrt{\frac{\sigma^2}{1+P+\sigma^2+2\corr}}- \sqrt{(1-\gamma)^2\sigma^2 + \gamma^2 P -2\gamma (1-\gamma)\corr}\right)^+\right)^2.
\end{equation}
Because the bound holds for every $\gamma$,
\begin{equation}
\expect{(\widehat{X}_1-X_1)^2}\geq \sup_{\gamma\in\mathbb{R}} \frac{1}{\gamma^2}\left(\left(\sqrt{\frac{\sigma^2}{1+P+\sigma^2+2\corr}}- \sqrt{(1-\gamma)^2\sigma^2 + \gamma^2 P -2\gamma (1-\gamma)\corr}\right)^+\right)^2,
\end{equation}
for the chosen $\mathcal{E}_m$ and $\mathcal{D}_m$. 

Now, from~\eqref{eq:sigmasu}, $\corr$ can take values in $[-\sigma\sqrt{P},\sigma\sqrt{P}]$. Because the lower bound depends on $\mathcal{E}_m$ and $\mathcal{D}_m$ only through $\corr$, we obtain the following lower bound for all $\mathcal{E}_m$ and $\mathcal{D}_m$,
\begin{equation}
\expect{(\widehat{X}_1-X_1)^2}\geq \inf_{|\corr|\leq \sigma\sqrt{P}}\sup_{\gamma\in\mathbb{R}} \frac{1}{\gamma^2}\left(\left(\sqrt{\frac{\sigma^2}{1+P+\sigma^2+2\corr}}- \sqrt{(1-\gamma)^2\sigma^2 + \gamma^2 P -2\gamma (1-\gamma)\corr}\right)^+\right)^2,
\end{equation}
which proves Corollary~\ref{coro:wit}. Notice that we did not take limits in ``blocklength'' $m$ anywhere, and hence the lower bound here holds for all values of $m$.
\end{proof}

\subsection*{The case of nonzero rate}
To prove Theorem~\ref{thm:newlower}, consider now the problem when the encoder wants to also communicate a message $M$ reliably to the decoder at rate $R$. 
Since reliable communication requires $\Pr(M\neq \widehat{M})=\epsilon_m\rightarrow 0$ as $m\rightarrow\infty$, using Fano's inequality, $H(M|\widehat{M})\leq m\delta_m$ where $\delta_m\rightarrow 0$. Thus,
\begin{eqnarray}
\nonumber I(M;\widehat{M})&=& H(M)-H(M|\widehat{M})= mR - H(M|\widehat{M})\\
&\geq & mR - m\delta_m = m(R-\delta_m).
\label{eq:fano}
\end{eqnarray}
As before, we consider a mutual information inequality that follows directly from the Markov chain $(M,\instvec)\rightarrow \xonevec \rightarrow \m{Y}\rightarrow (\m{\widehat{X}}_1,\widehat{M})$ :
\begin{equation}
\label{eq:dpi2}
I(M,\instvec;\widehat{M},\m{\widehat{X}}_1) \leq I(\xonevec ;\m{Y}).
\end{equation}
The RHS can be bounded above as in~\eqref{eq:ixy}. For the LHS,
\begin{eqnarray}
\nonumber 
I(M,\instvec;\widehat{M},\m{\widehat{X}}_1) &=& I(M;\widehat{M},\m{\widehat{X}}_1) + I(\instvec;\widehat{M},\m{\widehat{X}}_1|M)\\
\nonumber & \geq &I(M;\widehat{M}) +   I(\instvec;\widehat{M},\m{\widehat{X}}_1|M)\\
\nonumber & = &I(M;\widehat{M}) +  h(\instvec|M) -h(\instvec|\widehat{M},\m{\widehat{X}}_1,M)\\
\nonumber & \overset{\instvec\;\text{indep. of}\; M}{=} &I(M;\widehat{M}) +  h(\instvec) -h(\instvec|\widehat{M},\m{\widehat{X}}_1,M)\\
\nonumber  &\geq &I(M;\widehat{M}) +  h(\instvec) -h(\instvec|\m{\widehat{X}}_1)=I(M;\widehat{M}) +  I(\instvec;\m{\widehat{X}}_1)\nonumber\\
&\overset{\text{using~\eqref{eq:isx}}}{\geq} &I(M;\widehat{M}) +  mI(X_0;\widehat{X}_1).
\label{eq:breaking}
\end{eqnarray}
From~\eqref{eq:fano},~\eqref{eq:dpi2} and~\eqref{eq:breaking}, we obtain
\begin{eqnarray}
\label{eq:bigineq2}
\nonumber m(R-\delta_m)+  mI(X_0;\widehat{X}_1)& \overset{\text{using}~\eqref{eq:fano}}{\leq} &I(M;\widehat{M}) +  mI(X_0;\widehat{X}_1)\\
\nonumber & \overset{\text{using}~\eqref{eq:breaking}}{\leq} &I(M,\instvec;\widehat{M},\m{\widehat{X}}_1) \\
 &\overset{\text{using}~\eqref{eq:dpi2}}{\leq} & I(\xonevec ;\m{Y})\overset{\text{using}~\eqref{eq:ixy}}{\leq} m I(X_1;Y).
\end{eqnarray}
$I(X_1;Y)$ and $I(X_0;\widehat{X}_1)$ can be bounded as before in~\eqref{eq:ixycompute} and~\eqref{eq:isxhatcompute}. Observing that as $m\rightarrow\infty$, $\delta_m\rightarrow 0$, we get the following lower bound on the $\mathrm{MMSE}$ for nonzero rate,
\begin{equation}
\label{eq:newLowerWithRate}
\mathrm{MMSE}(P,R)\geq \inf_{\corr}\sup_{\gamma\in\mathbb{R}} \frac{1}{\gamma^2}\left(\left( \sqrt{\frac{\sigma^2 2^{2R}}{1+\sigma^2+P+2\corr}}  -   \sqrt{(1-\gamma)^2\sigma^2 + \gamma^2 P -2\gamma (1-\gamma)\corr}     \right)^+\right)^2.
\end{equation}
In the limit $\delta_m\rightarrow 0$, we require from~\eqref{eq:bigineq2} that $I(X_1;Y)\geq R$. This gives the following constraint on $\corr$,
\begin{eqnarray}
\frac{1}{2}\lo{1+P+\sigma^2+2\corr}\geq R \nonumber \\
\text{i.e.}\; \corr\geq \frac{2^{2R}-1-P-\sigma^2}{2},
\end{eqnarray}
yielding (in conjunction with~\eqref{eq:sigmasu}) the constraint on $\corr$ in Theorem~\ref{thm:newlower}. The constraint on $P$ in the Theorem follows from Costa's result~\cite{CostaDirtyPaper}, because the rate $R$ must be smaller than the capacity over a power constrained AWGN channel with known interference, $\frac{1}{2}\lo{1+P}$. 
\end{proof}
It is insightful to see how the lower bound in Corollary~\ref{coro:wit} is an improvement over that in~\cite{WitsenhausenJournal}. The lower bound in~\cite{WitsenhausenJournal} is given by
\begin{equation}
\label{eq:oldbounddd}
\mathrm{MMSE}(P,0)\geq \left(\left(  \sqrt{\frac{\sigma^2}{(\sigma+\sqrt{P})^2+1}}   -\sqrt{P}\right)^+\right)^2,
\end{equation}
which again holds for all $m$. Because any $\gamma$ provides a valid lower bound in Corollary~\ref{coro:wit}, choosing $\gamma=1$ in the bound of Theorem~\ref{thm:newlower} provides the following (loosened) bound,
\begin{equation}
\label{eq:rbd}
\mathrm{MMSE}(P,R)\geq \inf_{|\corr|\leq \sigma\sqrt{P}}\left(\left(  \sqrt{\frac{\sigma^2 2^{2R}}{\sigma^2+P+2\corr+1}}   -\sqrt{P}\right)^+\right)^2.
\end{equation}
The infimum in this loosened bound is achieved at $\corr=\sigma\sqrt{P}$, yielding a new loosened bound\footnote{This loosened bound proves convenient for algebraic manipulations in the proof of Theorem~\ref{thm:approximate}.} on $\mathrm{MMSE}(P,R)$,
\begin{equation}
\label{eq:looselower}
\mathrm{MMSE}(P,R)\geq\left(\left(  \sqrt{\frac{\sigma^2 2^{2R}}{\sigma^2+P+2\sigma\sqrt{P}+1}}   -\sqrt{P}\right)^+\right)^2,
\end{equation}
matching the expression in~\eqref{eq:oldbounddd} at $R=0$.


\subsection{The upper bound}
\label{sec:dpc}
We use the combination of linear and dirty-paper coding strategies of~\cite{WitsenhausenJournal}, but now we communicate a message at rate $R$ as well. 

\subsubsection{A description of the combination of linear and DPC-based strategy}
\label{sec:dpcdes}
The encoder divides its input into two parts $\m{V}=\m{V}_{\mathrm{lin}}+\m{V}_{\mathrm{dpc}}$, where $\m{V}_{\mathrm{lin}}$ and $\m{V}_{\mathrm{dpc}}$ have powers $P_{\mathrm{lin}}$ and $P_{\mathrm{dpc}}$ respectively.  
The linear part $\m{V}_{\mathrm{lin}}$ is used to scale the host signal down by a factor $ (1-\beta)$ (using $\m{V}_{\mathrm{lin}}=-\beta\m{X}_0$) so that $P_{\mathrm{lin}}= \beta^2\sigma^2$. The resulting scaled down host signal $\wtildemn{X}_0=(1-\beta)\m{X}_0$ has variance $\widetilde{\sigma}^2=\sigma^2(1-\beta)^2$ in each dimension.

The remaining ``interference'' is $\wtildemn{X}_0$, which has variance $\widetilde{\sigma}^2$. The transmitter uses Costa's strategy with average power $P_{\mathrm{dpc}}$ and the knowledge of effective interference $\wtildemn{X}_0$, and with the DPC-parameter~\cite[Section II]{CostaDirtyPaper} $\alpha$ allowed to be arbitrary. For completeness, this strategy is outlined below:

Define random variable $T:=V_{\mathrm{dpc}}+\alpha \widetilde{X}_0$, where $V_{\mathrm{dpc}}\sim\mathcal{N}(0,P_{\mathrm{dpc}})$ and $\widetilde{X}_0\sim\mathcal{N}(0,\widetilde{\sigma}^2)$ are independent random variables. Similarly, define $Y:=V_{\mathrm{dpc}}+\widetilde{X}_0+Z$, with $Z\sim\mathcal{N}(0,1)$ is independent of $V_{\mathrm{dpc}}$ and $\widetilde{X}_0$. 

The codebook $\mathcal{C}=\{\m{t}(i)\}_{i=1}^{2^{mR_T}}$ consists of $2^{mR_T}$ random codewords $\m{t}(i)$ (the ``auxiliary codewords''), where $R_T=I(T;Y)-\epsilon_1$ (for small $\epsilon_1>0$). Each element of each codeword is generated iid using the distribution of $T$. These codewords are then binned uniformly in $2^{mR}$ bins, each corresponding to a message $M\in\{1,2,\ldots,nR\}$. 

\textit{Encoding}: Given $\wtildemn{X}_0$ and the message $M$, the encoder chooses a sequence $\m{T}$ randomly from the set of sequences in the $M$-th bin that are jointly typical with $\wtildemn{X}_0$. The encoder then chooses $\m{V}_{\mathrm{dpc}}=\m{T}-\alpha\wtildemn{X}_0$. If $\wtildemn{X}_0,\m{T}$ are not jointly typical for any $\m{T}\in\mathcal{C}$, then the encoder declares an error and uses by default $\m{T}=\m{0}$, the all-zero codeword. 

\textit{Decoding}: The decoder looks for a unique codeword $\m{T}\in\mathcal{C}$ that is jointly typical with $\m{Y}$ (if such a unique sequence exists). If no such sequence exists, the decoder declares an error. 

\textit{Estimating $\m{X}_1$}: The decoder performs an MMSE estimation\footnote{We refer the reader to~\cite{WitsenhausenJournal} for the algebra of this joint estimation.} for estimating $\m{X}_1 =\m{X}_0+\m{V}=(1-\beta)\m{X}_0+\m{V}_{\mathrm{dpc}}$ using the channel output $\m{Y}=(1-\beta)\m{X}_0+\m{V}_{\mathrm{dpc}}+\m{Z}$ and the decoded codeword $\m{T}=\alpha(1-\beta)\m{X}_0 + \m{V}_{\mathrm{dpc}}$ (if $\m{T}$ has been decoded). 


The total average power used by this scheme is asymptotically $P=P_{\mathrm{lin}}+P_{\mathrm{dpc}}$, because  $\m{V}_{\mathrm{dpc}}$ is asymptotically uncorrelated with $\wtildemn{X}_0$~\cite{CostaDirtyPaper} (because they are drawn independently), and thus also asymptotically uncorrelated with $\m{V}_{\mathrm{lin}}=-\beta\instvec$. 

While our strategy requires recovering the auxiliary codeword $\m{T}$, arguments in~\cite{CostaDirtyPaper,gelfandpinsker} only establish that the message $M$ is recovered with high probability. Can the chosen auxiliary codeword $\m{T}$ also be recovered? Techniques in~\cite{CostaDirtyPaper,gelfandpinsker} do not suffice for showing this\footnote{We thank the anonymous reviewer who made this observation.}. Here, the competing codewords in the same bin as the true codeword are not independent of the channel output, even under random coding. Thus it is not clear how to compute the the probability that one of them happens to be typicalÊwith the channel output. 

The proof of the theorem below essentially shows that this dependence does not affect error probabilities significantly, and thus the chosen $\m{T}$-codeword can be recovered with high probability. The reason is that even though this encoding rule induces dependence between codewords, this dependence is not ``significant.''





\begin{theorem}
\label{thm:achievable}
With probability converging to $1$ as $m\rightarrow\infty$, the chosen $\m{T}$ codeword is decoded by the joint-typicality test described above as long as the communication rate $R$ is bounded as follows:
\begin{equation}
\label{eq:alphabeta}
R (\alpha,\beta)< \frac{1}{2}\lo{\frac{P_{\mathrm{dpc}}(P_{\mathrm{dpc}}+\widetilde{\sigma}^2+1)}{P_{\mathrm{dpc}}\widetilde{\sigma}^2(1-\alpha)^2+ P_{\mathrm{dpc}}+\alpha^2\widetilde{\sigma}^2}}.
\end{equation}
\end{theorem}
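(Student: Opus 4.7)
The plan is to analyze the Gelfand--Pinsker-style joint-typicality decoder via three error events: $\mathcal{E}_1$ (covering failure: no codeword in bin $M$ is jointly $\epsilon$-typical with $\wtildemn{X}_0$); $\mathcal{E}_2$ (the chosen tuple $(\m{T},\wtildemn{X}_0,\m{V}_{\mathrm{dpc}},\m{Y})$ is not jointly typical under the Gaussian law induced by the DPC construction); and $\mathcal{E}_3$ (some competitor $\m{T}'\in\mathcal{C}\setminus\{\m{T}\}$ is jointly typical with $\m{Y}$). A direct Gaussian computation verifies that the right-hand side of~\eqref{eq:alphabeta} equals $I(T;Y)-I(T;\widetilde{X}_0)$. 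Since $R_T=I(T;Y)-\epsilon_1$ and $R$ is strictly smaller than this difference, one can choose $\epsilon_1,\epsilon>0$ so that $R_T-R>I(T;\widetilde{X}_0)+\epsilon$; the covering lemma then gives $\Pr(\mathcal{E}_1)\to 0$, and the Gaussian AEP applied to the effective channel $(\widetilde{X}_0,V_{\mathrm{dpc}})\to Y$ gives $\Pr(\mathcal{E}_2\mid\bar{\mathcal{E}}_1)\to 0$.

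For $\mathcal{E}_3$, I split the competitors into those in other bins ($\mathcal{E}_3^{\mathrm{out}}$) and those in bin $M$ ($\mathcal{E}_3^{\mathrm{in}}$). For $\mathcal{E}_3^{\mathrm{out}}$: such codewords are iid $P_T$ and the encoder's selection rule never consults them, so they are jointly independent of $(\wtildemn{X}_0,\m{T},\m{V}_{\mathrm{dpc}},\m{Z})$ and hence of $\m{Y}$. The Gaussian packing lemma gives $\Pr((\m{T}',\m{Y})\text{ jointly typical})\le 2^{-m(I(T;Y)-\epsilon)}$ for each such $\m{T}'$, and a union bound over the $\le 2^{mR_T}$ of them yields $2^{m(R_T-I(T;Y)+\epsilon)}=2^{-m(\epsilon_1-\epsilon)}\to 0$ upon choosing $\epsilon<\epsilon_1$.

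The main obstacle is $\mathcal{E}_3^{\mathrm{in}}$: the encoder's rule couples same-bin competitors to $\wtildemn{X}_0$ (and hence to $\m{Y}$) because any $\m{T}'$ jointly typical with $\wtildemn{X}_0$ could in principle have been chosen. My plan is to split this event based on whether $(\m{T}',\wtildemn{X}_0)$ is itself jointly typical. If it is \emph{not}, conditioning on $\m{T}'$ being a non-chosen codeword in bin $M$ biases its distribution only by a sub-exponential factor away from iid $P_T$, and the standard packing bound $2^{-m(I(T;Y)-\epsilon)}$ still applies up to this factor. If it \emph{is}, then $\m{T}'$---being a different codeword from the transmitted $\m{T}$---is independent of $\m{T}$, and a direct second-moment computation shows that the empirical correlation between $\m{T}'$ and $\m{Y}=\m{T}+(1-\alpha)\wtildemn{X}_0+\m{Z}$ concentrates near $\alpha(1-\alpha)\widetilde{\sigma}^2$, which differs by a strictly positive constant from the value $P_{\mathrm{dpc}}+\alpha\widetilde{\sigma}^2$ that joint typicality with $\m{Y}$ would require. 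Hence a ``candidate but not chosen'' competitor fails the typicality test with probability $1-o(1)$, and the encoder-induced dependence is actually \emph{helpful} rather than harmful.

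Combining the two sub-cases, each same-bin competitor contributes probability at most $2^{-m(I(T;Y)-\epsilon')}$ up to polynomial factors, and a union bound over the $2^{m(R_T-R)}-1$ of them gives $\Pr(\mathcal{E}_3^{\mathrm{in}})\le 2^{-m(R+\epsilon_1-\epsilon')}\to 0$. Summing all error bounds shows the chosen $\m{T}$ is decoded with probability $\to 1$ whenever $R<I(T;Y)-I(T;\widetilde{X}_0)$, which is exactly~\eqref{eq:alphabeta}. The hardest step is making the ``essentially independent'' argument in the second sub-case rigorous: one must quantify the mild dependence introduced by the uniform selection rule within the bin's jointly-typical subset. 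I would do this via direct Gaussian empirical-covariance concentration bounds that exploit the linear structure of the DPC construction, thereby avoiding the type-based combinatorial machinery required in finite-alphabet treatments and keeping the proof compact.
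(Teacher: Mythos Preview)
Your decomposition is viable but differs substantially from the paper's. The paper does \emph{not} split competitors by bin or by whether they are jointly typical with $\wtildemn{X}_0$; instead it treats all $2^{mR_T}-1$ competitors uniformly via a density-ratio argument. The key step (Appendix~\ref{app:reliably}) first shows (Lemma~\ref{lem:tdist}) that $\Pr(\m{t}(i)\text{ chosen}\mid\mathcal{C})\le 2^{-m(R_T-3\epsilon-\epsilon_2)}$ for every nonzero codeword, then by symmetry reduces to the event that codeword~$1$ is chosen, and finally bounds the conditional density of any other codeword: using Markov's inequality on the random cardinality $|\mathcal{C}\cap\mathcal{T}_{\m{x}_0}\cap\mathcal{B}(M)|$, one obtains $f(\m{T}(2)=\m{t}\mid \m{t}(1)\text{ chosen},\m{x}_0,\m{t}(1))\le 2^{m(3\epsilon+\epsilon_2)+3}f(\m{T}(2)=\m{t})$, after which the standard packing bound $2^{-m(I(T;Y)-3\epsilon)}$ applies with only a sub-exponential loss. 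This argument uses nothing specifically Gaussian and would transfer to other channels; your geometric route exploits the linear DPC structure and is arguably more transparent here, but is not portable.

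Two points to tighten in your route. First, in your in-bin Case~B the claimed concentration value $\alpha(1-\alpha)\widetilde{\sigma}^2$ for $\frac{1}{m}\langle\m{T}',\m{Y}\rangle$ omits the cross term $\frac{1}{m}\langle\m{T}',\m{T}\rangle$: since both $\m{T}$ and $\m{T}'$ are jointly typical with the same $\wtildemn{X}_0$, they are \emph{not} asymptotically orthogonal but satisfy $\frac{1}{m}\langle\m{T}',\m{T}\rangle\to\alpha^2\widetilde{\sigma}^2$, giving a total of $\alpha\widetilde{\sigma}^2$. Your conclusion survives because the gap to the required value $P_{\mathrm{dpc}}+\alpha\widetilde{\sigma}^2$ is still $P_{\mathrm{dpc}}>0$. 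Second, ``fails with probability $1-o(1)$'' is not strong enough for the union bound: the expected number of Case-B competitors is $\approx 2^{m\epsilon_2}$, so you need the per-competitor typicality probability to decay like $2^{-cm}$ for some fixed $c>0$ and then take $\epsilon_2<c$. Gaussian large-deviation bounds on empirical second moments do give such a $c$ (depending on $P_{\mathrm{dpc}}$), so the argument closes---but this quantitative statement, not merely $o(1)$, is what your final paragraph must actually prove.
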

\begin{proof}
The proof in Appendix~\ref{app:reliably} shows that the error probability converges to zero as long as \begin{equation}
R<I(T;Y)-I(X_0;T).
\label{eq:achrate}
\end{equation} 
That the expression in~\eqref{eq:achrate} equals that in~\eqref{eq:alphabeta} is shown in~\cite[Eq. (6)]{CostaDirtyPaper}.
\end{proof}
Keeping the rate $R$ constant, the obtained $\mathrm{MMSE}$ can now be minimized over the choice of $\alpha$ and $\beta$ under constraint~\eqref{eq:alphabeta}.

\subsubsection{Analytical upper bound on costs}
To obtain numerical upper bounds on the asymptotically optimal costs, we optimize~\eqref{eq:alphabeta} over $\alpha$ and $\beta$ for a given rate $R$ in Section~\ref{sec:dpcdes}. However, for analytical convenience in proving approximate-optimality of the family of strategies proposed here, we now derive looser upper bounds that have closed-form expressions. 

\begin{theorem}
For the problem as stated in Section~\ref{sec:probstat}, for communicating reliably at rate $R$, the asymptotic optimal cost is upper bounded by:
\begin{equation}
\lim_{m\to\infty}\avgcost_{\text{opt}} \leq \min\left\{k^2 2^{2R},  k^2 (2^{2R}-1) + \min\left\{1, \frac{\sigma^2}{2^{4R}+(2^{2R}-1)\sigma^2} \right\},     k^2 (\sigma^2+2^{2R}-1)  \right\}.
\end{equation}
\end{theorem}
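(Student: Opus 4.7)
The plan is to prove the upper bound by exhibiting three achievable strategies, one for each term inside the minimum, drawn from the linear-plus-DPC family of Section~\ref{sec:dpcdes}. By Theorem~\ref{thm:achievable}, each triple $(\alpha,\beta,P_{\mathrm{dpc}})$ satisfying the rate constraint~\eqref{eq:alphabeta} yields an achievable scheme whose total power equals $\beta^2\sigma^2+P_{\mathrm{dpc}}$ (by the asymptotic uncorrelatedness of $\m{V}_{\mathrm{lin}}$ and $\m{V}_{\mathrm{dpc}}$) and whose reconstruction MMSE can be bounded by analyzing the decoder's estimate of $\m{X}_1$ from the decoded auxiliary codeword $\m{T}$ and the channel output $\m{Y}$. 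The minimum of the three resulting weighted costs is the claimed bound.

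For the term $k^2(\sigma^2+2^{2R}-1)$, set $\beta=1$ so that the host is scaled away and $\widetilde{\sigma}^2=0$. With $P_{\mathrm{dpc}}=2^{2R}-1$, formula~\eqref{eq:alphabeta} collapses (for any $\alpha$) to $\tfrac{1}{2}\lo{1+P_{\mathrm{dpc}}}=R$; moreover the decoded codeword is $\m{T}=\m{V}_{\mathrm{dpc}}=\m{X}_1$, giving zero MMSE and power $\sigma^2+2^{2R}-1$. For the term $k^2\cdot 2^{2R}$, the key observation is that setting $\alpha=1$ makes $T=V_{\mathrm{dpc}}+\widetilde{X}_0=X_1$ identically, so decoding $\m{T}$ recovers $\m{X}_1$ exactly regardless of $\beta$. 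Taking $\beta=0$ and $P_{\mathrm{dpc}}=2^{2R}$ in~\eqref{eq:alphabeta} gives the rate $R+\tfrac{1}{2}\lo{1+1/(2^{2R}+\sigma^2)}>R$, so the constraint is met strictly; the power is $P_{\mathrm{dpc}}=2^{2R}$ and the MMSE is zero.

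For the middle term, take the Costa parameters $\beta=0$, $\alpha=(2^{2R}-1)/2^{2R}$, $P_{\mathrm{dpc}}=2^{2R}-1$, which achieve rate $R$ at power $2^{2R}-1$, and bound the MMSE by whichever of two simple estimators is smaller. The trivial estimator $\whatmn{X}_1=\m{Y}$ gives MMSE at most $\expect{Z^2}=1$. The one-observation linear MMSE estimator $\widehat{X}_1=\tfrac{\alpha\sigma^2+P_{\mathrm{dpc}}}{P_{\mathrm{dpc}}+\alpha^2\sigma^2}T$ from the decoded $\m{T}$ alone gives, by the standard Gaussian formula, $\expect{(X_1-\widehat{X}_1)^2}=(\sigma^2+P_{\mathrm{dpc}})-(\alpha\sigma^2+P_{\mathrm{dpc}})^2/(P_{\mathrm{dpc}}+\alpha^2\sigma^2)$; after using the identity $(\sigma^2+P)(P+\alpha^2\sigma^2)-(\alpha\sigma^2+P)^2=\sigma^2 P(1-\alpha)^2$ and substituting the Costa values of $\alpha$ and $P_{\mathrm{dpc}}$, this collapses to $\sigma^2/[(P_{\mathrm{dpc}}+1)^2+P_{\mathrm{dpc}}\sigma^2]=\sigma^2/[2^{4R}+(2^{2R}-1)\sigma^2]$.

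The main calculation requiring care is this last algebraic simplification; the remaining issue is to argue that the vanishing probability of incorrectly decoding $\m{T}$ does not spoil the asymptotic MMSE bound. Since the linear estimators have bounded coefficients in terms of $R$, $P$, and $\sigma^2$, a standard truncation/Cauchy--Schwarz argument on the error event handles this routinely; for the trivial $\whatmn{X}_1=\m{Y}$ bound, decoding of $\m{T}$ is not even used. Combining the three achievable costs and taking the minimum yields the stated upper bound on $\lim_{m\to\infty}\avgcost_{\text{opt}}$.
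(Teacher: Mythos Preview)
Your proposal is correct and follows essentially the same approach as the paper: the three terms arise from the $DPC(1)$ strategy $(\beta=0,\alpha=1)$, Costa's choice $(\beta=0,\alpha=P/(P+1))$, and host cancellation $(\beta=1)$, with the middle MMSE bounded separately via $\m{Y}$ alone and via $\m{T}$ alone. Your bound of $1$ from the trivial estimator $\whatmn{X}_1=\m{Y}$ is slightly cruder than the paper's linear-MMSE value $(\sigma^2+P)/(\sigma^2+P+1)$, but both suffice for the stated theorem; conversely, you are more careful than the paper in flagging the need to control the contribution of decoding-error events to the MMSE.
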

\begin{proof}
See Appendix~\ref{app:upperbound}.
\end{proof}

\subsection{Tightness at $\mathrm{MMSE}=0$.}
\label{sec:match}
\begin{theorem}
\label{thm:match}
For the problem as stated in Section~\ref{sec:probstat}, for communicating reliably at rate $R$, let $R_0$ be defined as follows
\begin{equation}\label{eq:upRate}
R_0:=\sup_{\corr\in [-\sigma\sqrt{P},0]}\frac{1}{2}\lo{ \frac{   (P\sigma^2 - \corr^2) (1+\sigma^2+P+2\corr) }{\sigma^2(\sigma^2+P+2\corr)}}.
\end{equation}
If $R_0\geq 0$, a combination of linear and DPC-based strategies can (asymptotically) achieve $\mathrm{MMSE}=0$ and rate $R=R_0$, and no scheme can achieve $\mathrm{MMSE}=0$ with $R>R_0$. If $R_0<0$, no scheme can achieve $\mathrm{MMSE}= 0$. Further, inverting~\eqref{eq:upRate} provides the minimum required power $P$ to communicate at rate $R=R_0$. 
%
%
\end{theorem}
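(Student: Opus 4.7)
The plan is to prove both directions by specializing results already in the excerpt: the lower bound of Theorem~\ref{thm:newlower} gives the converse, and the linear$+$DPC scheme of Section~\ref{sec:dpcdes} with a specific choice of the Costa parameter gives matching achievability. The two sides turn out to coincide exactly because, once simplified, they are the same rational function of the correlation $\corr=\expect{X_0 V}$.

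\textbf{Converse.} For $\mathrm{MMSE}(P,R)=0$, the nonnegative bound of Theorem~\ref{thm:newlower} must vanish, which forces the existence of a $\corr$ in the allowed range such that for every $\gamma\in\mathbb{R}$,
\[
\sqrt{\frac{\sigma^2 2^{2R}}{1+\sigma^2+P+2\corr}} \;\leq\; \sqrt{(1-\gamma)^2\sigma^2 + \gamma^2 P -2\gamma(1-\gamma)\corr}.
\]
The right-hand side is a convex quadratic in $\gamma$; I will minimize it by setting the derivative to zero, obtaining the minimum value $(\sigma^2 P-\corr^2)/(\sigma^2+P+2\corr)$. Substituting into the inequality and solving for $R$ yields the single-letter bound
\[
R \;\leq\; \frac{1}{2}\lo{\frac{(P\sigma^2-\corr^2)(1+\sigma^2+P+2\corr)}{\sigma^2(\sigma^2+P+2\corr)}},
\]
which is precisely the expression inside the $\sup$ defining $R_0$. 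To argue that restricting the supremum to $\corr\in[-\sigma\sqrt{P},0]$ does not weaken the bound, I will compute the derivative of the bracketed function at $\corr=0$ and show it is negative (a short calculation gives $-2P/(P+\sigma^2)^2$), so the maximum over the full Cauchy--Schwartz range $|\corr|\leq\sigma\sqrt{P}$ is attained at some $\corr\leq 0$ (the right endpoint being $g(\sigma\sqrt{P})=0$ anyway). If $R_0<0$, the same inequality fails even at $R=0$ for every admissible $\corr$, ruling out $\mathrm{MMSE}=0$ entirely.

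\textbf{Achievability.} In the linear$+$DPC scheme of Section~\ref{sec:dpcdes}, I will fix $\alpha=1$. The key observation is that then
\[
\m{T} \;=\; \m{V}_{\mathrm{dpc}} + \wtildemn{X}_0 \;=\; \m{V}_{\mathrm{dpc}} + (1-\beta)\instvec \;=\; \xonevec,
\]
so the auxiliary codeword recovered by joint typicality is \emph{literally} the modified host signal. Whenever $\m{T}$ is decoded correctly, which happens with probability tending to one by Theorem~\ref{thm:achievable}, $\xonevec$ is reconstructed exactly and the MMSE is zero. Plugging $\alpha=1$, $P_{\mathrm{dpc}}=P-\beta^2\sigma^2$, $\widetilde{\sigma}^2=(1-\beta)^2\sigma^2$ into the rate formula of Theorem~\ref{thm:achievable}, and using $\corr=\expect{X_0 V}=-\beta\sigma^2$, a direct algebraic check gives
\[
\frac{P_{\mathrm{dpc}}(P_{\mathrm{dpc}}+\widetilde{\sigma}^2+1)}{P_{\mathrm{dpc}}+\widetilde{\sigma}^2} \;=\; \frac{(P\sigma^2-\corr^2)(1+\sigma^2+P+2\corr)}{\sigma^2(\sigma^2+P+2\corr)},
\]
and the range $\beta\in[0,\sqrt{P}/\sigma]$ imposed by $P_{\mathrm{lin}}\leq P$ maps bijectively onto $\corr\in[-\sigma\sqrt{P},0]$. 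Optimizing $\beta$ on this side thus realizes the supremum defining $R_0$. The power-versus-rate inversion asserted at the end of the theorem is then just solving the displayed equation for $P$.

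\textbf{Main obstacle.} The bulk of the work is the algebra checking that the two expressions coincide---once after the $\alpha=1$ substitution on the achievability side and once after the $\gamma$-minimization on the converse side. Conceptually, the interesting point is \emph{why} a converse proved via the loose Cauchy--Schwartz step~\eqref{eq:align} ends up tight in this regime: the alignment of $\m{X}_1-\m{\widehat{X}}_1$ with $(1-\gamma)\instvec-\gamma\m{V}$ is ordinarily impossible because $\m{\widehat{X}}_1$ is a function of $\m{Y}$, but becomes vacuously attainable in the limit $\m{\widehat{X}}_1\to\xonevec$ where the residual vanishes. This explains the perfect agreement between the converse and achievability precisely at $\mathrm{MMSE}=0$.
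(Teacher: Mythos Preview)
Your proposal is correct and follows essentially the same route as the paper: achievability via the linear$+$DPC scheme with $\alpha=1$ (so that $\m{T}=\xonevec$), and the converse by forcing the Theorem~\ref{thm:newlower} bound to vanish at the optimizing $\gamma^*=\frac{\sigma^2+\corr}{\sigma^2+P+2\corr}$, which is exactly the $\gamma$ you obtain by minimizing the quadratic. The algebra and parametrization $\corr=-\beta\sigma^2$ match the paper's.

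One small point: your argument for restricting the supremum to $\corr\le 0$ is not quite complete. Knowing $g'(0)<0$ and $g(\sigma\sqrt{P})=0$ does not by itself exclude a local maximum on $(0,\sigma\sqrt{P})$ exceeding $g(0)$. The paper instead uses a direct symmetry comparison: for fixed $|\corr|$ the factor $P\sigma^2-\corr^2$ is unchanged while $1+\frac{1}{\sigma^2+P+2\corr}$ is strictly larger for the negative sign, so $g(-|\corr|)\ge g(|\corr|)$ and the sup over $[-\sigma\sqrt{P},\sigma\sqrt{P}]$ equals that over $[-\sigma\sqrt{P},0]$. You can simply swap in this one-line observation.
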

\begin{proof}

\textit{The achievability}

As shown in Theorem~\ref{thm:achievable}, the combination of linear and DPC-based strategies of~\cite{WitsenhausenJournal} recovers $\m{V}_{\mathrm{dpc}}+\alpha(1-\beta)\instvec$ at the decoder with high probability. In order to perfectly recover $\xonevec =\m{V}_{\mathrm{dpc}}+(1-\beta)\instvec$, we can use $\alpha=1$, and hence from Theorem~\ref{thm:achievable}, an achievable rate is:
\begin{equation}
R_{\mathrm{ach}} = \sup_{P_{\mathrm{lin}},P_{\mathrm{dpc}}: P=P_{\mathrm{lin}}+P_{\mathrm{dpc}}}\frac{1}{2}\lo{\frac{P_{\mathrm{dpc}}(P_{\mathrm{dpc}}+\widetilde{\sigma}^2+1)}{P_{\mathrm{dpc}}+\widetilde{\sigma}^2}},
\end{equation}
where we take a supremum over $P_{\mathrm{lin}},P_{\mathrm{dpc}}$ such that they sum up to $P$. Let $\corr= -\sigma\sqrt{P_{\mathrm{lin}}}$ (note that as $P_{\mathrm{lin}}$ varies from $0$ to $P$, $\corr$ varies from $0$ to $-\sigma\sqrt{P}$). Then, $P_{\mathrm{dpc}}= P - \frac{\corr^2}{\sigma^2}$, and $P_{\mathrm{dpc}}+\widetilde{\sigma}^2 = P_{\mathrm{dpc}}+\sigma^2 + P_{\mathrm{lin}}- 2\sigma\sqrt{P_{\mathrm{lin}}} = P+\sigma^2 +2\corr$. 
Simple algebra shows that this expression matches that in Theorem~\ref{thm:match}. 

\vspace{0.1in}

\textit{The converse}



Let $P$ be large enough\footnote{As we shall see, this minimum required power at zero rate for perfect reconstruction can be obtained by inverting~\eqref{eq:upRate}.} so that perfect reconstruction of $\m{X}_1$ is possible at rate 0. 
 Since we are free to choose $\gamma$ in Theorem~\ref{thm:newlower}, let $\gamma = \gamma^* = \frac{\sigma^2+ \corr}{\sigma^2+P+2\corr}$. Then, $1-\gamma^* = \frac{P+ \corr}{\sigma^2+P+2\corr}$. Thus, we get
\begin{equation}
0\geq \inf_{\corr} \frac{1}{\gamma^{*^2}}\left(\left( \sqrt{\frac{\sigma^2 2^{2R}}{1+\sigma^2+P+2\corr}}  -   \sqrt{(1-\gamma^*)^2\sigma^2 + \gamma^{*^2} P -2\gamma^* (1-\gamma^*)\corr}     \right)^+\right)^2.
\end{equation}
It has to be the case that the term inside $(\cdot{})^+$ is non-positive for some value of $\corr$. This immediately yields\footnote{The algebra below is included for the convenience of the reviewers and can be removed in the final version.}
\begin{eqnarray*}
2^{2R}&\leq &\sup_{\corr} \frac{1}{\sigma^2}\left((1-\gamma^*)^2\sigma^2 + \gamma^{*^2} P -2\gamma^* (1-\gamma^*)\corr \right) (1+\sigma^2+P+2\corr)\\
& = & \textcolor{red}{\sup_{\corr}\frac{1}{\sigma^2}\frac{\left( (P+\corr)^2\sigma^2 + (\sigma^2 + \corr)^2 P - 2 (P+\corr)(\sigma^2+\corr)\corr  \right)}{(\sigma^2+P+2\corr)^2}(1+\sigma^2+P+2\corr)}\\
& =&\textcolor{red}{\sup_{\corr} \frac{1}{\sigma^2}\frac{\left(   P^2\sigma^2 - \corr^2\sigma^2 +2P\corr\sigma^2 + P\sigma^4 - P\corr^2 -2\corr^3  \right)}{(\sigma^2+P+2\corr)^2} (1+\sigma^2+P+2\corr)}\\
& =&\textcolor{red}{\sup_{\corr} \frac{1}{\sigma^2}\frac{\left(   (P\sigma^2 - \corr^2) (P+\sigma^2 + 2\corr)  \right)}{(\sigma^2+P+2\corr)^2} (1+\sigma^2+P+2\corr)}\\
& =&\sup_{\corr} \frac{   (P\sigma^2 - \corr^2) (1+\sigma^2+P+2\corr) }{\sigma^2(\sigma^2+P+2\corr)} 
\end{eqnarray*}
Thus, we get the following upper bound on achievable rate $R$ with perfect reconstruction of $\m{X}_1$,
\begin{equation}
\label{eq:upRate2}
R\leq \sup_{\corr\in [-\sigma\sqrt{P},\sigma\sqrt{P}]}\frac{1}{2}\lo{ \frac{   (P\sigma^2 - \corr^2) (1+\sigma^2+P+2\corr) }{\sigma^2(\sigma^2+P+2\corr)} }.
\end{equation}
The term $(P\sigma^2-\corr^2)$ does not depend on the sign of $\corr$. However, the  term 
\begin{equation}
\frac{1+\sigma^2+P+2\corr}{\sigma^2+P+2\corr} =1 +  \frac{1}{\sigma^2+P+2\corr}
\end{equation}
is clearly larger for $\corr<0$ if we fix $|\corr|$. Thus the supremum in~\eqref{eq:upRate2} is attained at some $\corr<0$, and we get
\begin{equation}
\label{eq:upRate3}
R\leq \sup_{\corr\in [-\sigma\sqrt{P},0]}\frac{1}{2}\lo{ \frac{   (P\sigma^2 - \corr^2) (1+\sigma^2+P+2\corr) }{\sigma^2(\sigma^2+P+2\corr)} }=R_0,
\end{equation}
from~\eqref{eq:upRate}. Thus for perfect reconstruction ($\mathrm{MMSE}=0$), the  combination of linear and DPC strategies proposed in~\cite{WitsenhausenJournal} is optimal.
\end{proof}

\subsection{Approximate optimality over all problem parameters}
\begin{theorem}
\label{thm:approximate}
Let $\mathrm{MMSE}_{lb}(P)$ denote the lower bound from Theorem~\ref{thm:newlower}, that is,
 \begin{equation}
\mathrm{MMSE}_{lb}(P):=\inf_{\corr}\sup_{\gamma\in\mathbb{R}} \frac{1}{\gamma^2}\left(\left( \sqrt{\frac{\sigma^2 2^{2R}}{1+\sigma^2+P+2\corr}}  -   \sqrt{(1-\gamma)^2\sigma^2 + \gamma^2 P -2\gamma (1-\gamma)\corr}     \right)^+\right)^2.
\end{equation}
Then, the following bounds characterize the optimal cost for the problem stated in Section~\ref{sec:probstat} in the asymptotic limit of $m\rightarrow\infty$:
\begin{equation}
\label{eq:approx}
\inf_{P\geq 2^{2R}-1} k^2 P + \mathrm{MMSE}_{lb}(P) \leq \avgcost_{\text{opt}}\leq 16\left( \inf_{P\geq 2^{2R}-1} k^2 P + \mathrm{MMSE}_{lb}(P)  \right),
\end{equation} 
uniformly over all rates $R$ and all parameters $k,\sigma^2$.
\end{theorem}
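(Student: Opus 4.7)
The lower inequality in~\eqref{eq:approx} is a direct consequence of Theorem~\ref{thm:newlower}. For any sequence $\{\mathcal{E}_m,\mathcal{D}_m\}_{m=1}^\infty$ with vanishing error probability and per-message input power $P_m$, reliability at rate $R$ forces $P_m \geq 2^{2R}-1$ (Costa's capacity bound for DPC), and the distortion satisfies $\expect{d(\xonevec,\whatmn{X}_1)} \geq \mathrm{MMSE}_{lb}(P_m)$ in the $m\to\infty$ limit. Hence $\avgcost \geq k^2 P_m + \mathrm{MMSE}_{lb}(P_m) \geq \inf_{P\geq 2^{2R}-1}\{k^2 P + \mathrm{MMSE}_{lb}(P)\}$, and taking the infimum over schemes and the limit in $m$ yields the lower inequality.

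The upper inequality is the crux. My plan is to combine the closed-form three-way achievable cost $U(k,\sigma,R)$ from the previous theorem with the loosened lower bound~\eqref{eq:looselower}, and to verify $U(k,\sigma,R) \leq 16\,\mathcal{L}$, where $\mathcal{L} := \inf_{P\geq 2^{2R}-1}[k^2 P + \mathrm{MMSE}_{lb}(P)]$, via a case analysis on $(k,\sigma,R)$. Each term in $U$ corresponds to a natural operating point: $k^2 2^{2R}$ arises from a scheme that drives the host to a coded representation of $M$ alone; $k^2(2^{2R}-1)+\min\{1,\sigma^2/(2^{4R}+(2^{2R}-1)\sigma^2)\}$ is pure DPC at the minimum power for rate $R$; and $k^2(\sigma^2+2^{2R}-1)$ corresponds to a combination of partial linear cancellation and DPC.

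The case analysis partitions $(k,\sigma,R)$-space by which of the three terms minimizes $U$. In regimes where $k^2$ is large enough that one of the first two terms dominates, the feasibility constraint $P \geq 2^{2R}-1$ alone already forces $\mathcal{L} \geq k^2(2^{2R}-1)$, so it remains to compare only the bounded additive MMSE contribution $\min\{1,\sigma^2/(\cdot)\}$ to a constant-factor multiple of $\mathcal{L}$, using the loosened bound~\eqref{eq:looselower} evaluated at $P = 2^{2R}-1$. In regimes where $k^2$ is small enough that the third term wins, I would evaluate~\eqref{eq:looselower} at $P$ of order $\sigma^2$ (where the bracket becomes small so $\mathrm{MMSE}_{lb}(P) \ll k^2 \sigma^2$) and verify $\mathcal{L} \gtrsim k^2\sigma^2$ within a constant factor. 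Sandwiching these regions yields the uniform ratio bound.

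The main obstacle will be the intermediate transitional regime, where none of the three terms of $U$ clearly dominates and the minimizing $P^\star$ of $\mathcal{L}$ is interior; there one must compare $(\sqrt{\sigma^2 2^{2R}/((\sigma+\sqrt{P})^2+1)}-\sqrt{P})^2$ against $k^2 P$ at a non-trivial $P^\star$ while tracking uniformity in all parameters. The constant $16$ will accumulate multiplicatively from (i)~passing from $\mathrm{MMSE}_{lb}$ to the closed-form loosening~\eqref{eq:looselower}, (ii)~substituting a feasible but suboptimal $P$ into the infimum defining $\mathcal{L}$, and (iii)~crude algebraic manipulations converting square-root differences $(\sqrt{a}-\sqrt{b})^2$ into ratios $(a-b)^2/a$ that are easier to compare with $k^2 P$. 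A sharper regime-by-regime bookkeeping would likely reduce the constant below $16$, but the stated value suffices to deliver the approximate optimality uniformly over all rates and parameters.
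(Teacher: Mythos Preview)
Your overall strategy is correct and matches the paper's: establish the lower inequality directly from Theorem~\ref{thm:newlower}, then prove the upper inequality by comparing the three-term achievable cost against the loosened lower bound~\eqref{eq:looselower} via case analysis. However, your proposed partition---splitting according to which of the three terms in $U(k,\sigma,R)$ is smallest---differs from the paper's, and this difference is exactly the source of your acknowledged ``main obstacle.''

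The paper instead lets $P^*$ denote the minimizer of the lower-bound expression $\inf_{P\geq 2^{2R}-1}\bigl\{k^2 P + \bigl((\sqrt{\kappa(P)}-\sqrt{P})^{+}\bigr)^2\bigr\}$ and partitions on $(R, P^*, \sigma^2)$: (1)~$R\geq 1/4$; (2)~$P^*\geq 2^{2R}/16$; (3)~$R<1/4$, $P^*<2^{2R}/16$, $\sigma^2>1$; (4)~the same but $\sigma^2\leq 1$, further split by $P^*\gtrless \sigma^2/8$. Partitioning on $P^*$ is the key simplification you are missing: when $P^*$ is large (Cases~2 and~4a), the lower bound is automatically at least $k^2 P^*$, and you compare directly against the $DPC(1)$ or host-cancellation upper bounds; when $P^*$ is small (Cases~3 and~4b), you substitute the known upper bound on $P^*$ into $\kappa(P^*)$ to obtain an explicit \emph{numerical} lower bound on the MMSE term, then compare against the $DPC(\alpha_{\text{Costa}})$ cost. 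Your ``transitional regime,'' where the minimizing $P^\star$ of $\mathcal{L}$ is interior and must be located, simply does not arise: knowing bounds on $P^*$ lets you \emph{evaluate} the lower bound at $P^*$ rather than having to solve for it.

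So your plan is sound in outline but the case split you propose makes the hard step harder than necessary; switching to a partition on the lower-bound minimizer $P^*$ (rather than on which achievable scheme wins) is what makes the constants close cleanly and avoids the unresolved obstacle you flagged.
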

\begin{proof}
See Appendix~\ref{app:approx}.
\end{proof}

\section{Numerical results}

\begin{figure}[htb]
\begin{center}
\includegraphics[width=3.4in]{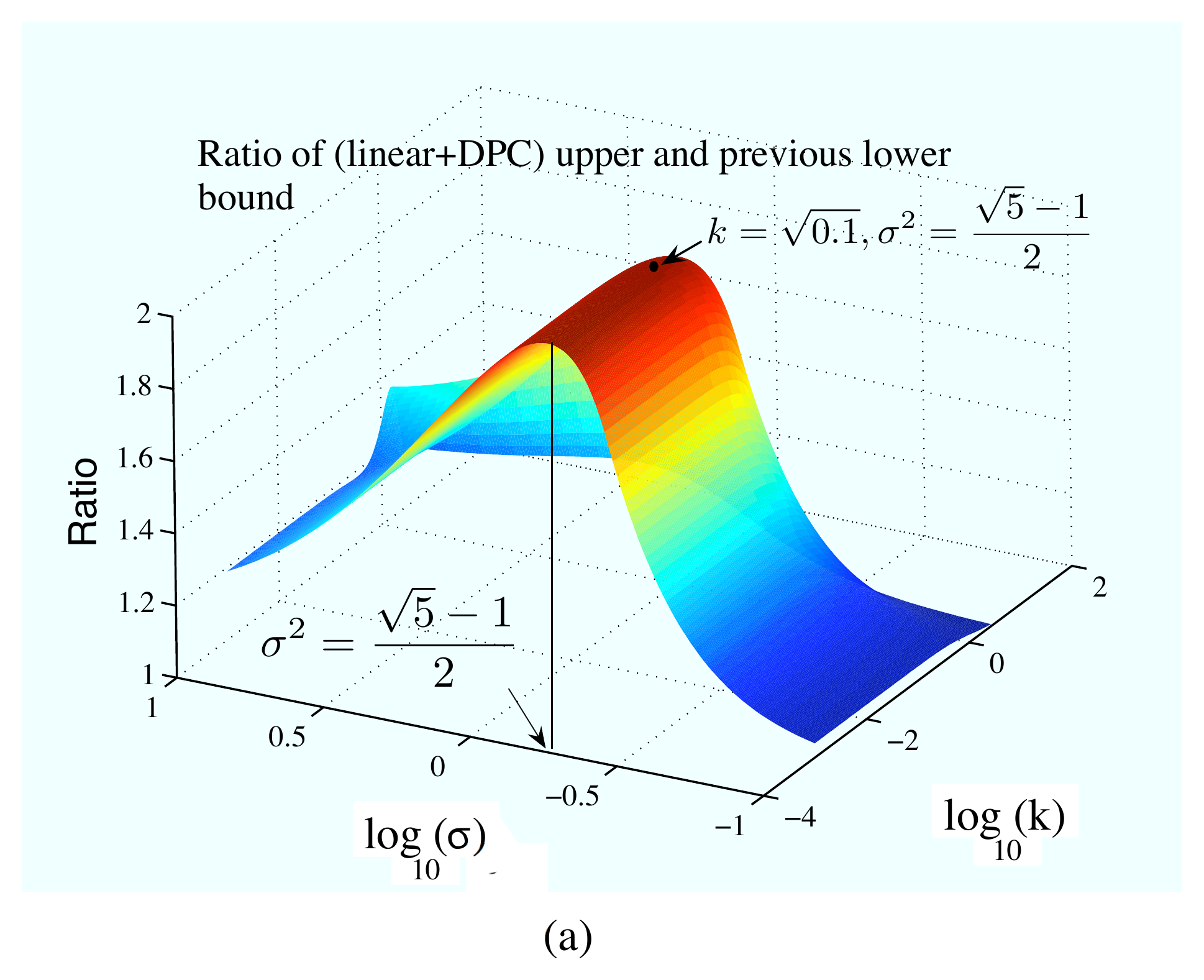}\includegraphics[width=3.4in]{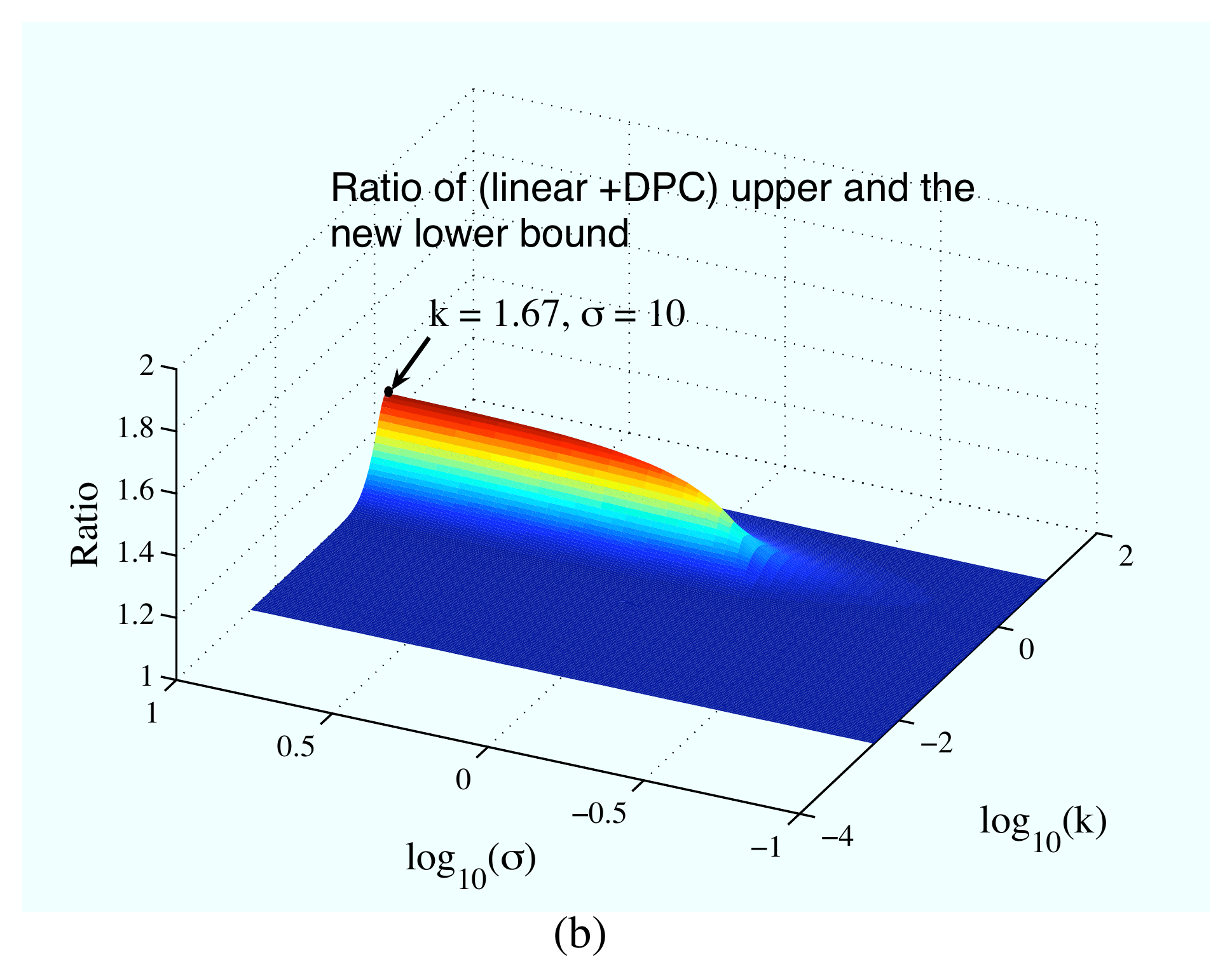}
\caption{The ratio of upper and lower bounds on the total asymptotic cost for the vector Witsenhausen counterexample (\textit{i.e.}, $R=0$ case) with the lower bound taken from~\cite{WitsenhausenJournal} in (a) and from Corollary~\ref{coro:wit} in (b). As compared to the previous best known ratio of $2$~\cite{WitsenhausenJournal}, the ratio here is smaller than $1.3$. Further, an infinitely long ridge along $\sigma^2=\frac{\sqrt{5}-1}{2}$ and small $k$ that is present in lower bounds of~\cite{WitsenhausenJournal} is no longer present here. This is a consequence of the tightness lower bound at $\mathrm{MMSE}=0$, and hence for small $k$. A ridge remains along $k\approx 1.67$ ($\log_{10}(k)\approx 0.22$) and large $\sigma$, and this can be understood by observing Fig.~\ref{fig:sigma} for $\sigma=10$. }
\label{fig:ratio}
\end{center}
\end{figure}

\begin{figure}[htb]
\begin{center}
\includegraphics[width=6.5in]{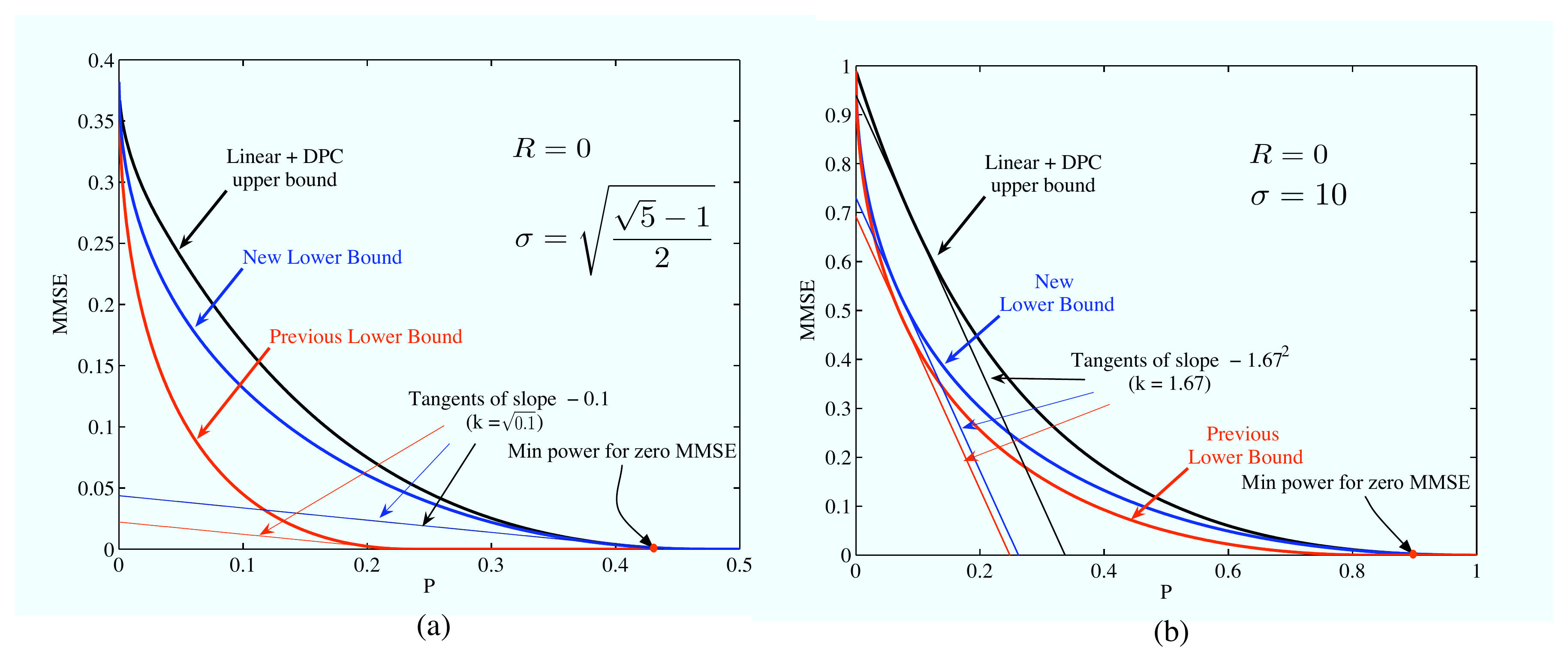}
\caption{Upper and lower bounds on  asymptotic $\mathrm{MMSE}$ vs $P$ for $\sigma=\sqrt{\frac{\sqrt{5}-1}{2}}$ (square-root of the Golden ratio; Fig. (a)) and $\sigma=10$ (b) for zero-rate (the vector Witsenhausen counterexample). The bounds match at $\mathrm{MMSE}=0$, characterizing the minimum power for perfect reconstruction of $\m{X}_1$. Tangents are drawn to evaluate the total cost for $k=\sqrt{0.1}$ for $\sigma=\sqrt{\frac{\sqrt{5}-1}{2}}$, and for  $k=1.67$ for $\sigma=10$ (slope $= -k^2$). The intercept on the $\mathrm{MMSE}$ axis of the tangent provides the respective bound on the total cost. The tangents to the upper bound and the new lower bound almost coincide for small values of $k$. At $k\approx 1.67$ and $\sigma=10$, however, our bound is not  significantly better than that in~\cite{WitsenhausenJournal} and hence the ridge along $k\approx 1.67$ remains in the new ratio plot in Fig.~\ref{fig:ratio}.}
\label{fig:sigma}
\end{center}
\end{figure}


\begin{figure}[htb]
\begin{center}
\includegraphics[scale=0.35]{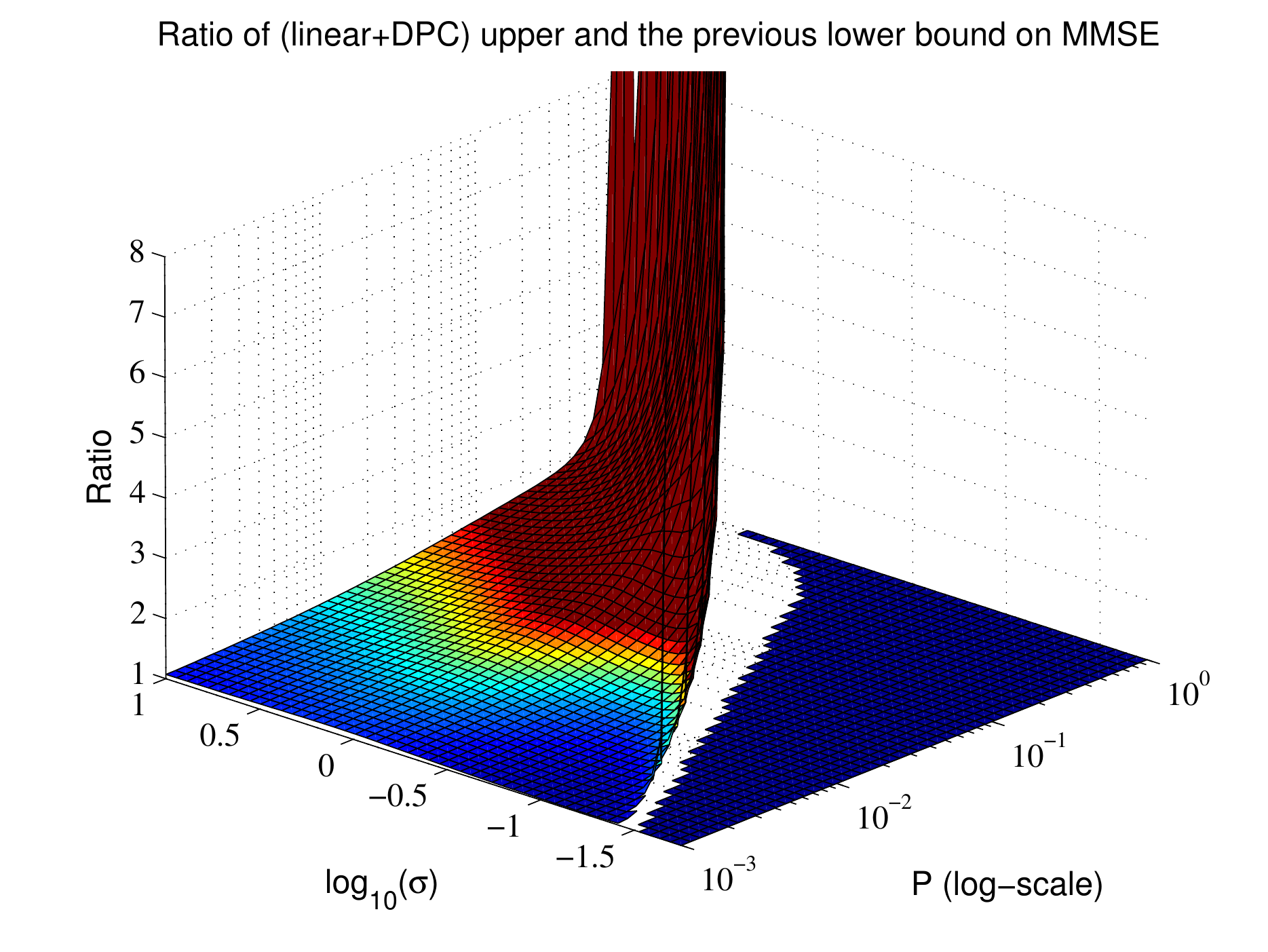}
\includegraphics[scale=0.35]{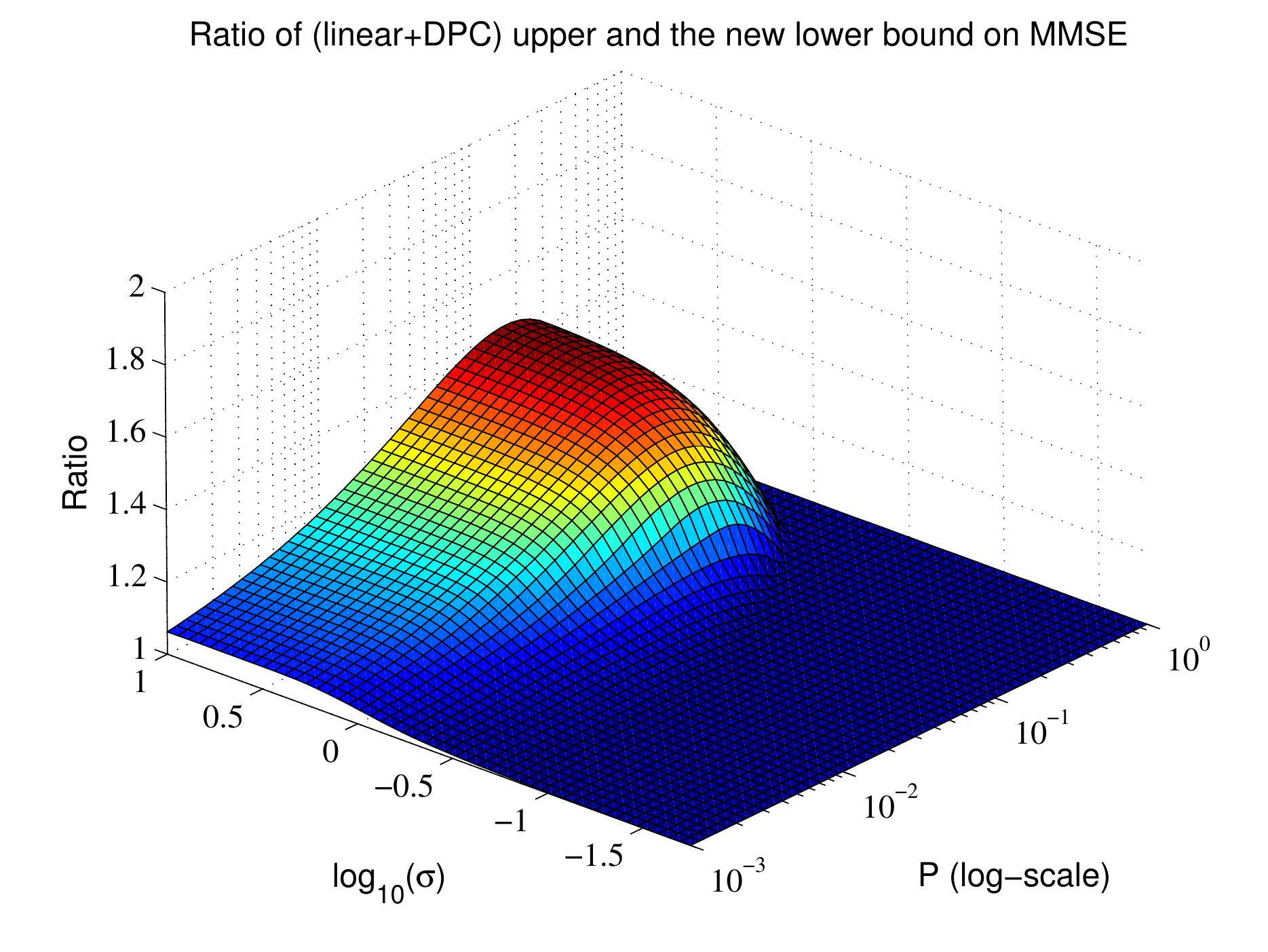}
\caption{Ratio of upper and lower bounds on $\mathrm{MMSE}$ vs $P$ and $\sigma$ at $R=0$. Whereas the ratio diverges to infinity with the old lower bound of~\cite{WitsenhausenJournal} (left), it is bounded by $1.5$ for the new bound (right) because of the improved tightness of the new bound at small $\mathrm{MMSE}$.}
\label{fig:MMSEratio}
\end{center}
\end{figure}

\begin{figure}[htb]
\begin{center}
\includegraphics[scale=0.35]{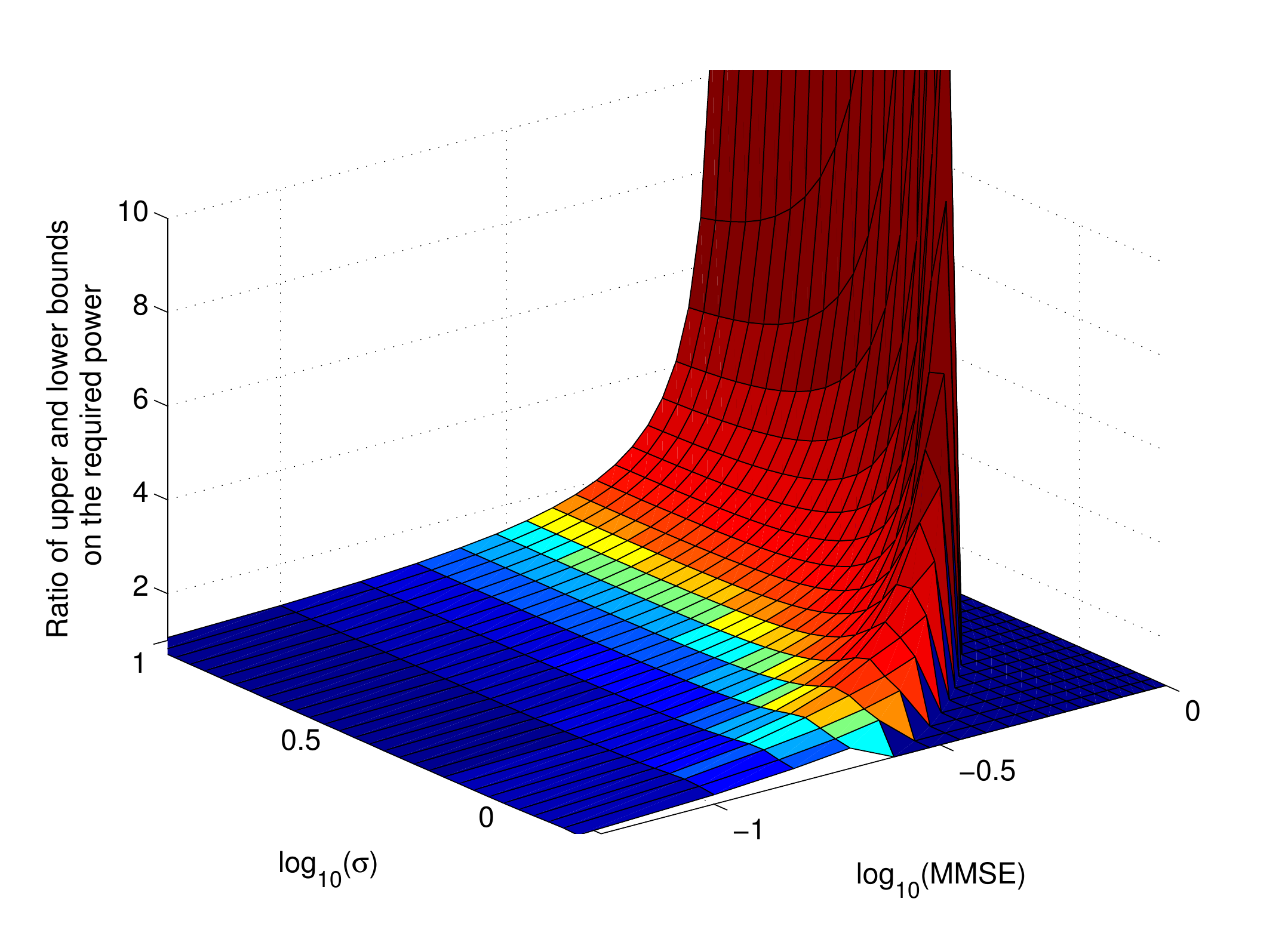}
\caption{Ratio of upper and lower bounds on $P$ vs $\mathrm{MMSE}$ and $\sigma$ at $R=0$. Interestingly, the ratio increases to infinity as $\sigma\rightarrow\infty$ along the path where $P$ is close to zero (corresponding to ``high" $\mathrm{MMSE}=\frac{\sigma^2}{\sigma^2+1}$).}
\label{fig:PowerRatio}
\end{center}
\end{figure}

%
%
%
%

In Theorem~\ref{thm:approximate}, we showed that the ratio of upper and lower bounds on asymptotic costs is bounded by $16$ for all rates and all problem parameters. For the special case of rate $0$ (which corresponds to the Witsenhausen counterexample),~\cite[Theorem 1]{WitsenhausenJournal} proves that the ratio of the upper bound used here, and the looser lower bound of~\cite{WitsenhausenJournal} is bounded by $11$. Further, numerical calculations in~\cite{WitsenhausenJournal} (reproduced here in Fig.~\ref{fig:ratio}) show that the ratio is bounded by $2$. Using the improved lower bound obtained in Theorem~\ref{thm:newlower}, numerical calculations plotted in Fig.~\ref{fig:ratio}(b) show that
in the asymptotic limit $m\rightarrow\infty$, the ratio of the upper and the new lower bounds (from Corollary~\ref{coro:wit}) on the weighted cost is bounded by $1.3$, an improvement over the ratio of $2$
in~\cite{WitsenhausenJournal}. 

The ridge of ratio $2$
along $\sigma^2=\frac{\sqrt{5}-1}{2}$ present in
Fig.~\ref{fig:ratio}(a) (obtained using the old bound from~\cite{WitsenhausenJournal}) does not exist with the new lower
bound since this small-$k$ regime corresponds to target $\mathrm{MMSE}$s close
to zero -- where the new lower bound is tight. This is illustrated in
Fig.~\ref{fig:sigma} (a). Also shown in Fig.~\ref{fig:sigma} (b) is the lack of tightness in the bounds at small $P$. The figure explains how this looseness results in the ridge along $k\approx 1.67$ still surviving in the new ratio plot.

Fig.~\ref{fig:MMSEratio} shows the ratio of upper and lower bounds on $\mathrm{MMSE}(P,0)$ versus $P$ and $\sigma$. While the ratio with the bound of~\cite{WitsenhausenJournal} was unbounded (Fig.~\ref{fig:MMSEratio}, left), the new ratio is bounded by a factor of $1.5$ (Fig.~\ref{fig:MMSEratio}, right). This is again a reflection of the tightness of the bound at small $\mathrm{MMSE}$. A flipped perspective is shown in Fig.~\ref{fig:PowerRatio}, where we compute the ratio of upper and lower bounds on required power to attain a specified $\mathrm{MMSE}$. As further evidence of the lack of tightness in the small-$P$ (``high distortion'') regime, the ratio of upper and lower bounds on required power diverges to infinity along the path $\mathrm{MMSE}=\frac{\sigma^2}{\sigma^2+1}$. This indicates that while we have a good understanding of good strategies close to zero reconstruction error, we have little understanding of strategies that help us get good returns on power investment close to zero power. 


The MATLAB code for these figures can be found in~\cite{CodeForInformationEmbedding}.

\section*{Acknowledgments}
P.~Grover and A.~Sahai acknowledge the support of the National Science Foundation (CNS-403427, CNS-093240, CCF-0917212 and CCF-729122) and Sumitomo Electric. A.~B.~Wagner acknowledges the support of NSF CSF-06-42925 (CAREER) grant and AFOSR grant FA9550-08-1-0060. We thank Hari Palaiyanur, Se Yong Park, Gireeja Ranade and Pravin Varaiya for helpful discussions, and the anonymous reviewers for valuable suggestions that improved this paper significantly.

\appendices{}

\section{Codewords $\m{T}$ can be recovered reliably}
\label{app:reliably}
\begin{definition}
A vector $\m{y}$ is said to be \textit{typical} (w.r.t. to a distribution $f_Y(y)$) if:
\begin{eqnarray*}
\left|\frac{1}{m}\sum_i\log f_Y(Y_i) + h(Y)  \right|\leq\epsilon.
 \end{eqnarray*}
\end{definition}

\begin{definition}
Vectors $\m{x}$ and $\m{y}$ are said to be \textit{jointly-typical} with respect to the joint distribution $f_{X,Y}(x,y)$ if $\m{x}$ is typical w.r.t $f_X(x)$, $\m{y}$ is typical w.r.t. $f_Y(y)$, and 
\begin{eqnarray*}
 \left|\frac{1}{m}\sum_i\log f_{X,Y}(X_i,Y_i) + h(X,Y)  \right|\leq\epsilon.
 \end{eqnarray*}
\end{definition}
\begin{definition}
The set of vectors $\m{x}$ jointly typical with $\m{y}$, denoted by $\mathcal{T}_{\m{y}}$, is defined as
\begin{eqnarray*}
\mathcal{T}_{\m{y}} : = \left\{   \begin{array}c
 \mathrm{Empty\; set} \;\;\;\text{if $\m{y}$ is not typical}   \\
\{ \m{x}: (\m{x},\m{y})\; \text{are jointly-typical} \}\;\;\text{if $\m{y}$ is typical}    \end{array}\right.
\end{eqnarray*}
\end{definition}
\begin{lemma}
\label{lem:type}
Given a joint distribution $f_{X,Y}(x,y)$ for $X$ and $Y$, for a typical vector $\m{y}$ (w.r.t. $f_Y(y)$), $\mathrm{Vol}(\mathcal{T}_{\m{y}}):= \int_{\m{x}\in\mathcal{T}_{\m{y}}} d\m{x}\leq 2^{m(h(X|Y)+2\epsilon)}$, and $\Pr(\m{X}\in\mathcal{T}_{\m{y}})\leq 2^{-m(I(X;Y)-3\epsilon)}$.
\end{lemma}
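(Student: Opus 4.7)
The plan is to exploit the standard sandwich bounds on joint and marginal densities implied by typicality, and then combine them with the identity $\int f_{X|Y}(\m{x}|\m{y})\,d\m{x}=1$ to bound the volume, and with the marginal upper bound on $f_X(\m{x})$ to bound the probability.

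First I would translate the three typicality conditions into density estimates. For any typical $\m{y}$, the marginal density factorizes as $f_Y(\m{y})=\prod_i f_Y(Y_i)$, and the typicality inequality gives $f_Y(\m{y})\le 2^{-m(h(Y)-\epsilon)}$. Similarly, for any $\m{x}$ typical with respect to $f_X$, we have $f_X(\m{x})\le 2^{-m(h(X)-\epsilon)}$, and for any jointly typical pair $(\m{x},\m{y})$ we have $f_{X,Y}(\m{x},\m{y})\ge 2^{-m(h(X,Y)+\epsilon)}$. Dividing the joint lower bound by the marginal upper bound yields the conditional lower bound
\begin{equation*}
f_{X|Y}(\m{x}\mid\m{y})\;\ge\;2^{-m(h(X,Y)-h(Y)+2\epsilon)}\;=\;2^{-m(h(X|Y)+2\epsilon)}
\end{equation*}
valid uniformly for every $\m{x}\in\mathcal{T}_{\m{y}}$.

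Next I would plug this into the trivial identity $1=\int f_{X|Y}(\m{x}\mid\m{y})\,d\m{x}\ge\int_{\m{x}\in\mathcal{T}_{\m{y}}} f_{X|Y}(\m{x}\mid\m{y})\,d\m{x}\ge 2^{-m(h(X|Y)+2\epsilon)}\,\mathrm{Vol}(\mathcal{T}_{\m{y}})$, which immediately gives the first conclusion $\mathrm{Vol}(\mathcal{T}_{\m{y}})\le 2^{m(h(X|Y)+2\epsilon)}$. For the second conclusion I would bound the probability by integrating the marginal $f_X$ over the typical set:
\begin{equation*}
\Pr(\m{X}\in\mathcal{T}_{\m{y}})\;=\;\int_{\m{x}\in\mathcal{T}_{\m{y}}}f_X(\m{x})\,d\m{x}\;\le\;2^{-m(h(X)-\epsilon)}\,\mathrm{Vol}(\mathcal{T}_{\m{y}}),
\end{equation*}
using that every $\m{x}\in\mathcal{T}_{\m{y}}$ is (by definition) typical with respect to $f_X$. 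Substituting the volume bound and using $h(X)-h(X|Y)=I(X;Y)$ gives $\Pr(\m{X}\in\mathcal{T}_{\m{y}})\le 2^{-m(I(X;Y)-3\epsilon)}$.

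There is essentially no obstacle here; the only subtlety is making sure the marginal density bound on $\m{y}$ is an upper bound (to produce a lower bound on the conditional density) and the marginal density bound on $\m{x}$ is an upper bound (to control the probability integral), both of which follow from the typicality definition with the signs as stated. The proof is just bookkeeping of the typicality inequalities combined with the conservation of conditional probability.
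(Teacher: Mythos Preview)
Your proposal is correct and follows essentially the same argument as the paper: both proofs lower-bound the conditional density $f_{X|Y}(\m{x}\mid\m{y})$ on $\mathcal{T}_{\m{y}}$ via the ratio of the joint-typicality lower bound and the marginal-typicality upper bound, integrate to get the volume estimate, and then multiply by the marginal upper bound on $f_X$ to obtain the probability estimate. The only difference is the order of presentation.
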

\begin{proof}
\begin{eqnarray}
\nonumber\Pr\left(  \m{X}\in\mathcal{T}_{\m{y}}   \right) &=& \int_{\m{x}\in\mathcal{T}_{\m{y}} }f(\m{x})d\m{x}\leq \int_{\m{x}\in\mathcal{T}_{\m{y}} } 2^{-m(h(X)-\epsilon)} d\m{x}\\
&=& 2^{-m(h(X)-\epsilon)} \int_{\m{x}\in\mathcal{T}_{\m{y}} } d\m{x}.
\label{eq:firstbound}
\end{eqnarray}
Now,
\begin{eqnarray*}
1&=&\int_{\m{x}\in\mathbb{R}^m} f(\m{x}|\m{y}) d\m{x} \geq \int_{\m{x}\in\mathcal{T}_{\m{y}}} f(\m{x}|\m{y}) d\m{x}=\int_{\m{x}\in\mathcal{T}_{\m{y}}} \frac{f(\m{x},\m{y})}{f(\m{y})} d\m{x}\\
&\geq & \int_{\m{x}\in\mathcal{T}_{\m{y}}} \frac{2^{-m(h(X,Y)+\epsilon)}}{2^{-m(h(Y)-\epsilon)}}d\m{x}= 2^{-m(h(X|Y)+2\epsilon)}\int_{\m{x}\in\mathcal{T}_{\m{y}}} d\m{x}=2^{-m(h(X|Y)+2\epsilon)}\mathrm{Vol}(\mathcal{T}_{\m{y}}).
\end{eqnarray*}
Thus, 
\begin{equation}
\label{eq:secondbound}
\mathrm{Vol}(\mathcal{T}_{\m{y}})\leq 2^{m(h(X|Y)+2\epsilon)}.
\end{equation}
Along with~\eqref{eq:firstbound},~\eqref{eq:secondbound} yields
\begin{equation}
\Pr\left(  \m{X}\in\mathcal{T}_{\m{y}}   \right)\leq 2^{-m(h(X)-\epsilon)}2^{m(h(X|Y)+2\epsilon)} = 2^{-m(I(X;Y)-3\epsilon)}
\end{equation}
\end{proof}
To derive the achievable rates for the DPC-based strategy of Section~\ref{sec:dpc}, we show below that as long as $R<I(T;Y)-I(X_0;T)$, the probability of error of recovering $\m{T}$ (and not just the message $M$) can be driven to zero. For simplicity, we assume that the scaling factor $\beta=0$, and thus $\wtildemn{X}_0=\m{X}_0$ and $P=P_{\mathrm{dpc}}$. The proof trivially extends to the case $\beta\neq 0$. Choose the rate $R_T$ as $R_T=I(T;Y)-\epsilon_1=I(X_0;T)+R+\epsilon_2$ for  $\epsilon_1,\epsilon_2>0$ satisfying $\epsilon_1 = 10\epsilon + 2\epsilon_2$ for some $\epsilon>0$. That such $\epsilon,\epsilon_1,$ and $\epsilon_2$ exist is true because $R<I(T;Y)-I(X_0;T)$. 

Let the codebook $\mathcal{C}=\{\m{t}(i)\}_{i=1}^{2^{mR_T}}$, where $\m{t}(i)$ are the codewords. Let the bins in which these codewords are uniformly distributed be denoted by $\mathcal{B}(M)$, where $M=1,2,\ldots,2^{mR}$ is the message-index. The ties in encoding are broken by choosing any $\m{t}$-codeword randomly from the set of codewords that lie in the message-bin and are jointly typical with $\m{X}_0$. The random variable $\Delta$ is used to denote this randomness in binning and breaking ties.

\begin{lemma}\label{lem:tdist}
For any codebook $\mathcal{C}=\{\m{t}(i)\}_{i=1}^{2^{mR_T}}$ with codewords $\m{t}(i)$ encoded according to the joint-typicality criterion above, the probability of any nonzero codeword $\m{t}(i)$ (averaged over realizations of $\m{X}_0$, $\Delta$ and $M$) being chosen is bounded by
\begin{equation}
\Pr(\m{t}(i)\;\text{chosen}|\mathcal{C}) \leq 2^{-m(R_T-3\epsilon-\epsilon_2)},
\end{equation}
where $R_T=I(X_0;T)+R+\epsilon_2$.
\end{lemma}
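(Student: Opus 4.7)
The plan is to decompose the selection probability according to the two stages of the encoding rule---message-bin matching and joint-typicality selection---and then invoke Lemma~\ref{lem:type}. Since each codeword $\m{t}(i)$ belongs to exactly one bin, call it $\mathcal{B}(M^*)$, and since the message $M$ is uniform on $\{1,\ldots,2^{mR}\}$ and drawn independently of the codebook $\mathcal{C}$ and of $\m{X}_0$, the event $\{M=M^*\}$ contributes an exact factor of $2^{-mR}$. Conditioned on $M=M^*$, the codeword $\m{t}(i)$ can be selected only if $\m{X}_0$ is jointly typical with $\m{t}(i)$, and even then the tie-breaking picks $\m{t}(i)$ uniformly among all jointly-typical candidates in the bin, so the conditional selection probability is at most $1$ on that event. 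This gives the loose but sufficient bound
\begin{equation}
\Pr(\m{t}(i)\;\text{chosen}\mid\mathcal{C}) \;\leq\; 2^{-mR}\,\Pr\!\left(\m{X}_0\in\mathcal{T}_{\m{t}(i)}\mid\mathcal{C}\right).
\end{equation}

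Next, I would apply Lemma~\ref{lem:type} with the roles $X\leftarrow X_0$ and $Y\leftarrow T$. If $\m{t}(i)$ is typical with respect to the marginal of $T$, the lemma immediately yields $\Pr(\m{X}_0\in\mathcal{T}_{\m{t}(i)}\mid\mathcal{C})\leq 2^{-m(I(X_0;T)-3\epsilon)}$; if $\m{t}(i)$ is atypical then $\mathcal{T}_{\m{t}(i)}$ is empty by definition and the probability is zero. Either way the same bound holds uniformly in the fixed codebook. Combining the two factors,
\begin{equation}
\Pr(\m{t}(i)\;\text{chosen}\mid\mathcal{C}) \;\leq\; 2^{-m(R+I(X_0;T)-3\epsilon)} \;=\; 2^{-m(R_T-\epsilon_2-3\epsilon)},
\end{equation}
using the identity $R_T=I(X_0;T)+R+\epsilon_2$ from the setup immediately preceding the lemma. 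This is exactly the stated bound.

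The argument is short because the encoding rule is benign: uniform tie-breaking among jointly-typical candidates can only shrink the per-codeword mass below the marginal joint-typicality probability, so we do not need to control the size of $\mathcal{B}(M^*)\cap\mathcal{T}_{\m{X}_0}$ from below. I do not anticipate a real obstacle; the only point that deserves a sentence of justification is that Lemma~\ref{lem:type} can be applied with a \emph{fixed} $\m{t}(i)$ from the codebook against the \emph{random} $\m{X}_0$, which is legitimate because the lemma's hypothesis asks only that the subscript sequence be typical and places the randomness on the independently drawn sequence---precisely the situation here since the codebook is conditioned upon.
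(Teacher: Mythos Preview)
Your proposal is correct and is essentially the same argument the paper gives: both bound the selection event by the necessary condition $\{\m{t}(i)\in\mathcal{B}(M)\}\cap\{\m{X}_0\in\mathcal{T}_{\m{t}(i)}\}$, factor by independence of the message/bin assignment from $\m{X}_0$, and then invoke Lemma~\ref{lem:type} for the joint-typicality factor. The only cosmetic difference is that the paper writes the bin event as $\Pr(\m{t}(i)\in\mathcal{B}(M))$ (averaging also over the random binning in $\Delta$) whereas you condition on the bin label $M^*$ and use the uniformity of $M$; either way the factor is exactly $2^{-mR}$.
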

\begin{proof}
For given $\m{x}_0$, the encoder chooses randomly among the $\m{t}$-codewords in message-bin $M$ that are jointly typical with $\m{x}_0$. Thus, if $\m{t}(i)$ is not typical, then $\Pr(\m{t}(i)\;\text{chosen}|\mathcal{C})=0$ and the above bound trivially holds. If $\m{t}(i)$ is typical,  
\begin{eqnarray}
\nonumber\Pr(\m{t}(i)\;\text{chosen}|\mathcal{C}) &\leq& \Pr(\m{t}(i)\in\mathcal{B}(M)\cap\mathcal{T}_{\m{x}_0}) = \Pr(\{\m{t}(i)\in\mathcal{B}(M)\}\cap\{\m{t}(i)\in\mathcal{T}_{\m{x}_0}\})\\
\nonumber &=&\Pr(\{\m{t}(i)\in\mathcal{B}(M)\})\Pr(\{\m{t}(i)\in\mathcal{T}_{\m{x}_0}\})\\
&\leq& \frac{1}{2^{mR}}2^{-m(I(T;X_0)-3\epsilon)}=2^{-m(I(X_0;T)+R-3\epsilon)}=2^{-m(R_T-3\epsilon-\epsilon_2)}.
\label{eq:tms}
\end{eqnarray}
%
\end{proof}
We remark that the bound above holds for probability of sending non-$\m{0}$ codewords. As we will see, in our proof of achievability, the encoding-failure event (which is when $\m{0}$ is transmitted) does not matter because it has low probability. 


Define the error-events $\mathcal{E}_1$, $\mathcal{E}_2$ and $\mathcal{E}_3$ as follows: \\$\mathcal{E}_1:= \{\text{no $\m{T}(j)$ is jointly typical with $\m{X}_0$}\}$, the encoding error-event; \\$\mathcal{E}_2:=\{\exists\; \whatmn{T} \in\mathcal{C}\cap\mathcal{T}_{\m{Y}} \;\text{s.t.}\whatmn{T}\neq \m{T}\}$; and \\$\mathcal{E}_3:=\{\text{$\m{T}$ and $\m{Y}$ are not jointly typical}\}$. Note that $\mathcal{E}_2\cup\mathcal{E}_3$ is the decoding error event.

The novelty in the proof lies in dealing with event $\mathcal{E}_2$ because the events $\mathcal{E}_1$ and $\mathcal{E}_3$ are the same as those used in proving the achievability for the Gelfand-Pinsker strategy~\cite[Pg. 180-181]{ElGamalKim}.\footnote{That $\Pr(\mathcal{E}_1)\to 0$ is a straightforward consequence of the fact that $R_T=I(X_0;T)+R+\epsilon_2$ and that $\mathcal{C}$ is a random codebook in the traditional sense. The proof can be found in~\cite[Pg. 180-181]{ElGamalKim}. For $\Pr(\mathcal{E}_3)$, we again refer the reader to~\cite[Pg. 180-181]{ElGamalKim} where it is shown that the chosen auxiliary codeword $\m{t}$ is jointly-typical with the channel output $\m{y}$ with high probability (although the result is shown for DMCs, it is clear that the result extends to continuous channels as well). } Focusing on the event $\mathcal{E}_2$, our requirement of reconstructing the $\m{t}$-codeword (and not just its message bin) imposes the condition that \textit{no other codeword} in $\mathcal{C}$ (and not merely the codewords outside the correct message bin) be jointly typical with the channel output $\m{y}$ (other than the chosen $\m{t}$-codeword). This introduces a certain dependence between the $\m{t}$-codewords under consideration (as discussed in Section~\ref{sec:dpcdes}), and our analysis of $\Pr(\mathcal{E}_2)$ below explicitly bounds this dependence.
%
%
\begin{eqnarray}
\nonumber && \mathrm{Pr}_{\m{X}_0,\mathcal{C},\Delta,\m{Z}}(\mathcal{E}_2)= \mathrm{Pr}_{\m{X}_0,\mathcal{C},\Delta,\m{Z}}(\{\exists\; \whatmn{T} \in\mathcal{C}\cap\mathcal{T}_{\m{Y}} \;\text{s.t.}\whatmn{T}\neq \m{T}\})\\\nonumber
&=& \expectp{\mathcal{C}}{\mathrm{Pr}_{\m{X}_0,\Delta,\m{Z}}(\{\exists\; \whatmn{T} \in\mathcal{C}\cap\mathcal{T}_{\m{Y}} \;\text{s.t.}\whatmn{T}\neq \m{T}\}|\mathcal{C})}\\\nonumber
&=&\hspace{-0.1in} \expectp{\mathcal{C}}{\sum_{i=1}^{2^{mR_T}}\mathrm{Pr}_{\m{X}_0,\Delta,\m{Z}}(\{\exists\; \whatmn{T} \in\mathcal{C}\cap\mathcal{T}_{\m{Y}} \;\text{s.t.}\whatmn{T}\neq \m{t}(i)\}|\m{t}(i)\;\text{chosen},\mathcal{C})\mathrm{Pr}_{\m{X}_0,\Delta,\m{Z}}(\m{t}(i)\;\text{chosen}|\mathcal{C})} \\
 && + \expectp{\mathcal{C}}{\mathrm{Pr}_{\m{X}_0,\Delta,\m{Z}}(\{\exists\; \whatmn{T} \in\mathcal{C}\cap\mathcal{T}_{\m{Y}}\} |\m{0}\;\text{chosen},\mathcal{C})\mathrm{Pr}_{\m{X}_0,\Delta,\m{Z}}(\m{0}\;\text{chosen}|\mathcal{C})}\nonumber\\
 &=& P_{e,1}  + P_{e,2},
 \label{eq:sumoftwo}
\end{eqnarray}
where $P_{e,1}$ and $P_{e,2}$ represent the first and the second term, respectively. $P_{e,2}$ converges to zero as shown below:
\begin{eqnarray*}
P_{e,2}&=&\expectp{\mathcal{C}}{\mathrm{Pr}_{\m{X}_0,\Delta,\m{Z}}(\{\exists\; \whatmn{T} \in\mathcal{C}\cap\mathcal{T}_{\m{Y}}\} |\m{0}\;\text{chosen},\mathcal{C})\mathrm{Pr}_{\m{X}_0,\Delta,\m{Z}}(\m{0}\;\text{chosen}|\mathcal{C})}\\
&\leq & \expectp{\mathcal{C}}{\mathrm{Pr}_{\m{X}_0,\Delta,\m{Z}}(\m{0}\;\text{chosen}|\mathcal{C})}=\Pr(\mathcal{E}_1) \to 0.
\end{eqnarray*}
Thus we focus on $P_{e,1}$ in~\eqref{eq:sumoftwo}:
\begin{eqnarray}
\nonumber
&&P_{e,1}\\
\nonumber&=&\hspace{-0.15in}\expectp{\mathcal{C}}{\sum_{i=1}^{2^{mR_T}}\mathrm{Pr}_{\m{X}_0,\Delta,\m{Z}}(\{\exists\; \whatmn{T} \in\mathcal{C}\cap\mathcal{T}_{\m{Y}} \;\text{s.t.}\whatmn{T}\neq \m{t}(i)\}|\m{t}(i)\;\text{chosen},\mathcal{C})\mathrm{Pr}_{\m{X}_0,\Delta,\m{Z}}(\m{t}(i)\;\text{chosen}|\mathcal{C})} \\
\nonumber&\overset{(a)}{\leq}& \expectp{\mathcal{C}}{\sum_{i=1}^{2^{mR_T}}\mathrm{Pr}_{\m{X}_0,\Delta,\m{Z}}(\{\exists\; \whatmn{T} \in\mathcal{C}\cap\mathcal{T}_{\m{Y}} \;\text{s.t.}\whatmn{T}\neq \m{t}(i)\}|\m{t}(i)\;\text{chosen},\mathcal{C})2^{-m(R_T-3\epsilon-\epsilon_2)}}\\\nonumber
&=& 2^{-m(R_T-3\epsilon-\epsilon_2)}\expectp{\mathcal{C}}{\sum_{i=1}^{2^{mR_T}}\mathrm{Pr}_{\m{X}_0,\Delta,\m{Z}}(\{\exists\; \whatmn{T} \in\mathcal{C}\cap\mathcal{T}_{\m{Y}} \;\text{s.t.}\whatmn{T}\neq \m{t}(i)\}|\m{t}(i)\;\text{chosen},\mathcal{C})}\\\nonumber
&\leq & 2^{-m(R_T-3\epsilon-\epsilon_2)}\expectp{\mathcal{C}}{\sum_{i=1}^{2^{mR_T}}\mathrm{Pr}_{\m{X}_0,\Delta,\m{Z}}(\m{t}(j)\in\mathcal{T}_{\m{Y}}\;\text{for some}\;j\neq i|\m{t}(i)\;\text{chosen},\mathcal{C})}\\\nonumber
&=& 2^{-m(R_T-3\epsilon-\epsilon_2)}\sum_{i=1}^{2^{mR_T}}\expectp{\mathcal{C}}{\mathrm{Pr}_{\m{X}_0,\Delta,\m{Z}}(\m{t}(j)\in\mathcal{T}_{\m{Y}}\;\text{for some}\;j\neq i|\m{t}(i)\;\text{chosen},\mathcal{C})}\\\nonumber
&=& 2^{-m(R_T-3\epsilon-\epsilon_2)}\sum_{i=1}^{2^{mR_T}}{\mathrm{Pr}_{\m{X}_0,\mathcal{C},\Delta,\m{Z}}(\m{T}(j)\in\mathcal{T}_{\m{Y}}\;\text{for some}\;j\neq i|\text{$i$-th codeword}\;\text{chosen})}\\\nonumber
&=& 2^{-m(R_T-3\epsilon-\epsilon_2)}2^{mR_T}{\mathrm{Pr}_{\m{X}_0,\mathcal{C},\Delta,\m{Z}}(\m{T}(j)\in\mathcal{T}_{\m{Y}}\;\text{for some}\;j\neq 1|\text{$1$st codeword}\;\text{chosen})}\\
&=& 2^{m(3\epsilon+\epsilon_2)}{\mathrm{Pr}_{\m{X}_0,\mathcal{C},\Delta,\m{Z}}(\m{T}(j)\in\mathcal{T}_{\m{Y}}\;\text{for some}\;j\neq 1|\text{$1$st codeword}\;\text{chosen})},
\label{eq:pebd}
\end{eqnarray}
where $(a)$ follows from Lemma~\ref{lem:tdist}.

Now,
\begin{eqnarray}
\nonumber&&{\mathrm{Pr}_{\m{X}_0,\mathcal{C},\Delta,\m{Z}}(\m{T}(j)\in\mathcal{T}_{\m{Y}}\;\text{for some}\;j\neq 1|\text{$1$st codeword}\;\text{chosen})} \\\nonumber
&=& \expectp{\m{X}_0,\m{T}(1)}{\mathrm{Pr}_{\mathcal{C},\Delta,\m{Z}}(\m{T}(j)\in\mathcal{T}_{\m{Y}}\;\text{for some}\;j\neq 1|\m{t}(1)\;\text{chosen},\m{x}_0,\m{t}(1))}\\\nonumber
&=& \expectp{\m{X}_0,\m{T}(1)}{\mathrm{Pr}_{\mathcal{C},\Delta,\m{Z}}\left(\bigcup_{j\neq 1}{\m{T}(j)}\in \mathcal{T}_{\m{Y}}|\m{t}(1)\;\text{chosen},\m{x}_0,\m{t}(1)\right)}\\\nonumber
&\leq& \expectp{\m{X}_0,\m{T}(1)}{\sum_{j\neq 1}\mathrm{Pr}_{\mathcal{C},\Delta,\m{Z}}\left({\m{T}(j)}\in \mathcal{T}_{\m{Y}}|\m{t}(1)\;\text{chosen},\m{x}_0,\m{t}(1)\right)}\\
&\overset{(c)}{\leq}& 2^{mR_T}\expectp{\m{X}_0,\m{T}(1)}{\mathrm{Pr}_{\mathcal{C},\Delta,\m{Z}}\left({\m{T}(2)}\in \mathcal{T}_{\m{Y}}|\m{t}(1)\;\text{chosen},\m{x}_0,\m{t}(1)\right)}.
\label{eq:inter1}
\end{eqnarray}
$(c)$ holds because, by symmetry, $\m{T}(j)$ are conditionally-identically distributed for $j\neq 1$, and the number of such $\m{T}(j)$'s is $2^{mR_T}-1$. 

In order to calculate the probability above, we obtain an upper bound on the conditional pdf of $\m{T}(2)$,
\begin{eqnarray}
&&\nonumber f\left(\m{T}(2)=\m{t}|\m{t}(1)\;\text{chosen},\m{x}_0,\m{t}(1)\right)\\\nonumber
&\overset{(d)}=&f\left(\m{T}(2)=\m{t}|\m{t}(1)\;\text{chosen},\m{x}_0,\m{t}(1),\m{t}(1)\in\mathcal{T}_{\m{x}_0}\cap\mathcal{B}(M)\right)\\\nonumber
&\overset{(d_1)}=&\frac{\Pr\left(\m{t}(1)\;\text{chosen}|\m{x}_0,\m{t}(1),\m{t}(1)\in\mathcal{T}_{\m{x}_0}\cap\mathcal{B}(M),\m{T}(2)=\m{t}\right)}{\Pr(\m{t}(1)\;\text{chosen}|\m{x}_0,\m{t}(1),\m{t}(1)\in\mathcal{T}_{\m{x}_0}\cap\mathcal{B}(M))}\\
\nonumber &&\times f\left(\m{T}(2)=\m{t}|\m{x}_0,\m{t}(1),\m{t}(1)\in\mathcal{T}_{\m{x}_0}\cap\mathcal{B}(M)\right)\\\nonumber
&\overset{(e)}{=}&\frac{\Pr\left(\m{t}(1)\;\text{chosen}|\m{x}_0,\m{t}(1),\m{t}(1)\in\mathcal{T}_{\m{x}_0}\cap\mathcal{B}(M),\m{T}(2)=\m{t}\right)}{\Pr(\m{t}(1)\;\text{chosen}|\m{x}_0,\m{t}(1),\m{t}(1)\in\mathcal{T}_{\m{x}_0}\cap\mathcal{B}(M))}f\left(\m{T}(2)=\m{t}\right)\\\nonumber
&\leq&\frac{1}{\Pr(\m{t}(1)\;\text{chosen}|\m{x}_0,\m{t}(1),\m{t}(1)\in\mathcal{T}_{\m{x}_0}\cap\mathcal{B}(M))}f\left(\m{T}(2)=\m{t}\right)\\
&\overset{(f)}{\leq}& \frac{1}{2^{-m(3\epsilon + \epsilon_2) - 3}}f(\m{T}(2)=\m{t})=2^{m(3\epsilon + \epsilon_2) + 3}f(\m{T}(2)=\m{t}).
\label{eq:inter2}
\end{eqnarray}
$(d)$ holds because in order for the first codeword $\m{T}(1)$ to be chosen at the encoder, it must be jointly typical with $\m{x}_0$. $(d_1)$ follows from Bayes's rule. $(e)$ holds because $\m{T}(2)$ is drawn independently of $\m{T}(1)$ and $\m{X}_0$. Notice that we are here interested in the pdf of the codeword $\m{T}(2)$, and \textit{not} the probability of it being \textit{chosen} by the encoder for transmission. The argument for $(f)$ is slightly complicated: we first calculate for a given typical $\m{x}_0$, how many $\m{T}$-codewords on average are jointly-typical with it and lie in the correct message bin. From Lemma~\ref{lem:type}, a randomly generated $\m{T}$ is jointly-typical with $\m{X}_0$ with probability at most $2^{-m(I(X_0;T)-3\epsilon)}$ and lies in $M$-th message-bin with probability $2^{-mR}$. Thus, $\expect{|\mathcal{C}\cap\mathcal{T}_{\m{x}_0}\cap\mathcal{B}(M)||\m{x}_0,\m{t}(1),\m{t}(1)\in\mathcal{T}_{\m{x}_0}\cap\mathcal{B}(M)}$, is at most 
\begin{eqnarray}
\nonumber\expect{|\mathcal{C}\cap\mathcal{T}_{\m{x}_0}\cap\mathcal{B}(M)||\m{x}_0,\m{t}(1),\m{t}(1)\in\mathcal{T}_{\m{x}_0}\cap\mathcal{B}(M)}\leq 2^{-m(I(X_0;T)-3\epsilon)}\times 2^{-mR}\times (2^{mR_T}-1) + 1\\
\leq 2^{-m(I(X_0;T)+R-3\epsilon)}\times 2^{m(I(X_0;T)+R+\epsilon_2)}+1 = 2^{m(3\epsilon + \epsilon_2)}+1\leq 2^{m(3\epsilon + \epsilon_2) + 1},
\end{eqnarray}
where the additional term `$+1$' in the first inequality accounts for the additional codeword $\m{t}(1)$ that also lies in $\mathcal{C}\cap\mathcal{T}_{\m{x}_0}\cap\mathcal{B}(M)$; and the last inequality holds for $m$ large. 

Using Markov's inequality, the conditional probability
\begin{eqnarray}
\nonumber
\Pr\bigg(|\mathcal{C}\cap\mathcal{T}_{\m{x}_0}\cap\mathcal{B}(M)| \geq 2^{m(3\epsilon + \epsilon_2)+2}|\m{x}_0,\m{t}(1),\m{t}(1)\in\mathcal{T}_{\m{x}_0}\cap\mathcal{B}(M)\bigg)\\
\leq \frac{\expect{|\mathcal{C}\cap\mathcal{T}_{\m{x}_0}\cap\mathcal{B}(M)||\m{x}_0,\m{t}(1),\m{t}(1)\in\mathcal{T}_{\m{x}_0}\cap\mathcal{B}(M)}}{2^{m(3\epsilon + \epsilon_2)+2}}\leq \frac{1}{2}.
\end{eqnarray}
Thus, with conditional probability at least $\frac{1}{2}$, $|\mathcal{C}\cap\mathcal{T}_{\m{x}_0}\cap\mathcal{B}(M)| < 2^{m(3\epsilon + \epsilon_2)+2}$. Thus 
\begin{eqnarray*}
&&\Pr(\m{t}(1)\;\text{chosen}|\m{x}_0,\m{t}(1),\m{t}(1)\in\mathcal{T}_{\m{x}_0}\cap\mathcal{B}(M))
\\&\overset{(g)}\geq& \Pr(\m{t}(1)\;\text{chosen}|\m{x}_0,\m{t}(1),\m{t}(1)\in\mathcal{T}_{\m{x}_0}\cap\mathcal{B}(M),|\mathcal{C}\cap\mathcal{T}_{\m{x}_0}\cap\mathcal{B}(M)| < 2^{m(3\epsilon + \epsilon_2)+2}) \\&&\times\Pr(|\mathcal{C}\cap\mathcal{T}_{\m{x}_0}\cap\mathcal{B}(M)| < 2^{m(3\epsilon + \epsilon_2)+2}| \m{x}_0,\m{t}(1),\m{t}(1)\in\mathcal{T}_{\m{x}_0}\cap\mathcal{B}(M)) \\
&\geq& \Pr(\m{t}(1)\;\text{chosen}|\m{x}_0,\m{t}(1),\m{t}(1)\in\mathcal{T}_{\m{x}_0}\cap\mathcal{B}(M),|\mathcal{C}\cap\mathcal{T}_{\m{x}_0}\cap\mathcal{B}(M)| < 2^{m(3\epsilon + \epsilon_2)+2}) \times \frac{1}{2}\\
&\geq &\frac{1}{2^{m(3\epsilon + \epsilon_2)+2}} \times \frac{1}{2} = \frac{1}{2^{m(3\epsilon + \epsilon_2)+3}},
\end{eqnarray*}
where $(g)$ follows from $\Pr(A)\geq \Pr(A|B)\Pr(B)$. This proves $(f)$ in~\eqref{eq:inter2}.

Now, from~\eqref{eq:sumoftwo},~\eqref{eq:pebd} and~\eqref{eq:inter1}
\begin{eqnarray*}
&&P_{e,1}\leq 2^{m(3\epsilon+\epsilon_2)}\times 2^{mR_T}\times \expectp{\m{X}_0,\m{T}(1)}{\mathrm{Pr}_{\mathcal{C},\Delta,\m{Z}}\left({\m{T}(2)}\in \mathcal{T}_{\m{Y}}|\m{t}(1)\;\text{chosen},\m{x}_0,\m{t}(1)\right)}\\
&\overset{(g_1)}{=}& 2^{m(3\epsilon+\epsilon_2)}\times 2^{mR_T}\times \\
&&\expectp{\m{X}_0,\m{T}(1)}{\int_{\m{z}}\mathrm{Pr}_{\mathcal{C},\Delta}\left({\m{T}(2)}\in \mathcal{T}_{\m{t}(1)+(1-\alpha) \m{x}_0+\m{z}}|\m{t}(1)\;\text{chosen},\m{x}_0,\m{t}(1)\right)dF(\m{z})}\\
&\overset{(h)}{\leq} &2^{m(3\epsilon+\epsilon_2)}\times 2^{mR_T}\times \expectp{\m{X}_0,\m{T}(1)}{\int_{\m{z}}2^{m(3\epsilon + \epsilon_2)+3}\mathrm{Pr}_{\mathcal{C},\Delta}\left(\m{T}(2)\in\mathcal{T}_{\m{t}(1)+(1-\alpha) \m{x}_0+\m{z}}\right)dF(\m{z})}\\
&{\leq} &2^{m(3\epsilon+\epsilon_2)}2^{m(3\epsilon + \epsilon_2)+3}\times 2^{mR_T}\times \expectp{\m{X}_0,\m{T}(1)}{\int_{\m{z}}\mathrm{Pr}_{\mathcal{C},\Delta}\left(\m{T}(2)\in\mathcal{T}_{\m{t}(1)+(1-\alpha) \m{x}_0+\m{z}}\right)dF(\m{z})}\\
&\overset{(i)}\leq&2^{m(6\epsilon+2\epsilon_2)+3}\times 2^{mR_T}\times \expectp{\m{X}_0,\m{T}(1)}{\int_{\m{z}}2^{-m(I(T;Y)-3\epsilon)}dF(\m{z})}\\
&\leq&2^{m(6\epsilon+2\epsilon_2)+3}\times 2^{mR_T}\times 2^{-m(I(T;Y)-3\epsilon)}\\
&=&2^{m(6\epsilon+2\epsilon_2)+3}\times 2^{mR_T}\times 2^{-m(R_T+\epsilon_1-3\epsilon)}=2^{m(6\epsilon+2\epsilon_2)+3}\times 2^{-m(\epsilon_1-3\epsilon)}\\
&=&8\times 2^{-m(\epsilon_1-9\epsilon -2\epsilon_2 )}= 8\times 2^{-m\epsilon}\overset{m\to\infty}\to 0,
\end{eqnarray*}
where $(g_1)$ follows from the observation that $\m{T}(2)$ is independent independent of $\m{Y}$ conditioned on $\{\m{t}(1)\;\text{chosen},\m{x}_0,\m{t}_1\}$ (because $\m{Y}=\m{t}(1)+(1-\alpha)\m{x}_0+\m{Z}$ depends only on $\m{Z}$ for given $\m{t}(1)$ and $\m{x}_0$), and finally, observing that $\Pr(X\in A) = \int_a \Pr(X\in a|A=a) dF(a)= \int_a\Pr(X\in a) dF(a)$ for indepedent $X$ and $A$. Inequality $(h)$ holds from~\eqref{eq:inter2},  and $(i)$ follows from Lemma~\ref{lem:type}.

Thus the $\m{t}$-codeword can be recovered reliably as long as $R+I(X_0;T)<I(T;Y)$, proving the theorem. 

\section{Analytical expressions for upper bounds on costs}
\label{app:upperbound}
\textbf{The $DPC(1)$ strategy ($\beta=0,\alpha=1$)}:\\
With $\alpha=1$ the decoder decodes $\m{X}_1$ perfectly (in the limit $m\rightarrow\infty$), thereby attaining asymptotically zero $\mathrm{MMSE}$. The attained rate (with $\beta=0$) is
\begin{equation}
R(1) = \frac{1}{2}\lo{\frac{P(P+\sigma^2+1)}{P+\sigma^2}}= \frac{1}{2}\lo{P\left(1+\frac{1}{P+\sigma^2}\right)}\geq   \frac{1}{2}\lo{P}.
\end{equation}
Thus, to attain a rate $R$, the cost is upper bounded by the cost attained by $DPC(1)$ strategy, yielding
\begin{equation}
\label{eq:dpcupper}
\avgcost{}_{DPC(1)}:= k^2 2^{2R}+ 0 = k^2 2^{2R}.
\end{equation}
\textbf{The $DPC(\alpha_{Costa})$ strategy ($\beta=0,\alpha=\alpha_{Costa}=\frac{P}{P+1}$)}\\
For this choice of $\alpha$, the achievable rate is well known to equal the channel capacity for an interference-free version of the channel~\cite{CostaDirtyPaper}
\begin{equation}
R(\alpha_{Costa}) =  \frac{1}{2}\lo{1+P},
\end{equation}
and the required power is, therefore, $P=2^{2R}-1$. The expression for MMSE in the estimation of $\m{X}_1=\m{X}_0+\m{V}$ can be unwieldy because it is estimated using $\m{T}=\m{V}+\alpha\m{X}_0$ as well as $\m{Y}=\m{X}_1+\m{Z}$. For analytical simplicity, we use two upper bounds.  Instead of estimating $\m{X}_1$ using \textit{both} $\m{Y}$ and $\m{T}$, we use just $\m{Y}$, or just $\m{T}$. When using just $\m{Y}$, note that the variance of $\m{X}_1=\m{X}_0+\m{V}$ is $m(\sigma^2+P)$ because of the asymptotic orthogonality of $\m{X}_0$ and $\m{V}$. Thus the MMSE error is $\frac{\sigma^2+P}{\sigma^2+P+1} < 1$. 

When estimating $\m{X}_1$ from $\m{T}$, assuming asymptotically perfect decoding of $\m{T}$, the MMSE error is 
\begin{eqnarray}
\label{eq:secondlowerbound}
\mathrm{MMSE} &=& \expect{X_1^2}-\frac{\left(\expect{X_1T}\right)^2}{\expect{T^2}}=P+\sigma^2 - \frac{(P+\alpha\sigma^2)^2}{P+\alpha^2\sigma^2}\nonumber\\
& = & \textcolor{red}{\frac{P^2+\alpha^2 P \sigma^2 + P\sigma^2+ \alpha^2\sigma^4   - P^2  -\alpha^2\sigma^4 -2\alpha P \sigma^2   }{P+\alpha^2\sigma^2}}\nonumber\\
&=& \textcolor{red}{\frac{\alpha^2P\sigma^2 + P\sigma^2 - 2 \alpha P \sigma^2}{P+\alpha^2\sigma^2}}\nonumber\\
& = & \frac{P\sigma^2  (1-\alpha)^2}{P+\alpha^2\sigma^2}\nonumber\overset{\left(\alpha=\frac{P}{P+1}\right)}{=}  \frac{ P\sigma^2\left(\frac{1}{P+1}\right)^2   }{P + \frac{P^2}{(P+1)^2}\sigma^2}=\frac{\sigma^2}{(P+1)^2+P\sigma^2}\nonumber\\
& \overset{\left(P=2^{2R}-1\right)}{=} & \frac{\sigma^2}{2^{4R}+(2^{2R}-1)\sigma^2}.
\end{eqnarray}
Thus, the cost for $DPC(\alpha_{Costa})$ strategy is upper bounded by
\begin{equation}
\avgcost_{DPC(\alpha_{Costa})} := k^2 (2^{2R} - 1) + \min \left\{ 1, \frac{\sigma^2}{2^{4R}+(2^{2R}-1)\sigma^2}\right\}
\end{equation}
\textbf{Host signal cancelation ($\beta=1$)}:\\
In this strategy, we first force the host signal to zero (using $\beta=1$), and then add a codeword to communicate across the channel. The resulting strategy is thus equivalent to trivial dirty-paper coding where the interference is zero. For consistency, we continue to denote the remaining part of the input by $\m{V}_{\mathrm{dpc}}$. Since the message (and hence the input $\m{V}_{\mathrm{dpc}}$) is independent of the initial state $\m{X}_0$, the total power required is the sum of the powers of the codeword and the initial state. Because the channel is now just an AWGN channel, the required $\m{V}_{\mathrm{dpc}}$ power to communicate at rate $R$ is $2^{2R}-1$. The required total power is $P=\sigma^2+2^{2R}-1$, the asymptotic average reconstruction error is zero, and the required cost is bounded by
\begin{equation}
\avgcost{}_{cancel}\leq k^2 (\sigma^2+2^{2R}-1).
\end{equation}

\section{Approximate optimality of the proposed strategies}
\label{app:approx}
We divide the parameter space into four regions. Let $\kappa:=\frac{\sigma^2 2^{2R}}{(\sigma+\sqrt{P})^2+1}$. From~\eqref{eq:looselower}, a lower bound on the total asymptotic costs is given by
\begin{equation}
\label{eq:loosened}
\avgcost_{\text{opt}}\geq \inf_{P\geq 2^{2R}-1}k^2 P + \left(\left(   \sqrt{\kappa} - \sqrt{P}   \right)^+ \right)^2,
\end{equation}
where the lower limit on $P$ follows from the observation that to communicate reliably at rate $R$, it requires a power of at least $2^{2R}-1$~\cite{CostaDirtyPaper}. In the proof below, define $P^*$ to be the optimizing value of $P$ in~\eqref{eq:loosened}.

\textit{Case 1}: $R\geq\frac{1}{4}$ (and any $P^*,\sigma^2$).\\
We use the $DPC(1)$ upper bound from~\eqref{eq:dpcupper} of $k^2 2^{2R}$. The lower bound in~\eqref{eq:loosened} larger than $k^2(2^{2R}-1)$. The ratio of upper and lower bounds is therefore smaller than
\begin{eqnarray*}
\frac{k^2 2^{2R}}{k^2 (2^{2R}-1)}=\frac{2^{2R}}{2^{2R}-1}=1+ \frac{1}{2^{2R}-1}\overset{\left(R\geq \frac{1}{4}\right)}{\leq} 1 + \frac{1}{\sqrt{2}-1}= \frac{\sqrt{2}}{\sqrt{2}-1}\approx 3.4<4.
\end{eqnarray*}

\textit{Case 2}: $P^*\geq \frac{2^{2R}}{16}$ (and any $\sigma^2$).\\
Again, we use the $DPC(1)$ upper bound of $k^22^{2R}$. The lower bound is larger than $k^2P^*\geq k^2  \frac{2^{2R}}{16}$. Thus, the ratio of upper and lower bounds for this case is smaller than $16$.

\textit{Case 3}: $R<\frac{1}{4}$, $P^*< \frac{2^{2R}}{16}, \sigma^2 >1$.\\
For the lower bound, note that
\begin{eqnarray*}
\kappa &=& \frac{\sigma^2 2^{2R}}{(\sigma+\sqrt{P^*})^2+1}\overset{(a)}{\geq}  \frac{2^{2R}}{(1+\sqrt{P^*})^2+1}\\
&\overset{P^*< \frac{2^{2R}}{16}}{\geq} & \frac{2^{2R}}{\left(1+\frac{2^R}{4}\right)^2+1}
\overset{0<R<\frac{1}{4}}{\geq} \frac{1}{\left(1+\frac{2^{\frac{1}{4}}}{4}\right)^2 + 1}\approx 0.3727> 0.37,
\end{eqnarray*}
where $(a)$ follows from the fact that $\sigma^2\geq 1$ and that the expression on left-hand-side of $(a)$ is increasing in $\sigma^2$ (to see this, divide the numerator and the denominator by $\sigma^2$).

Thus, 
\begin{eqnarray*}
\avgcost{}_{\text{opt}}&\geq& k^2 (2^{2R}-1)+ \left(\left(\sqrt{\kappa}-\sqrt{P^*}\right)^+\right)^2\\
&\geq &k^2 (2^{2R}-1)+ \left(\sqrt{0.37}-\frac{2^R}{4}\right)^2 \overset{\left(R<\frac{1}{4}\right)}{\geq}k^2 (2^{2R}-1)+ \left(\sqrt{0.37}-\frac{2^\frac{1}{4}}{4}\right)^2\\
&\approx&k^2 (2^{2R}-1)+ 0.0967 >k^2 (2^{2R}-1)+0.09.
\end{eqnarray*}
Upper bound of $DPC(\alpha_{Costa})$ is smaller than $k^2(2^{2R}-1)+1$. Thus the ratio in this case is smaller than $\frac{1}{0.09}\approx 11.11<12$.

\textit{Case 4}: $R<\frac{1}{4}$, $P^*< \frac{2^{2R}}{16}, \sigma^2 \leq1$.\\ 
\textit{Case 4a}: If $P^*\geq\frac{\sigma^2}{8}$.

Observe that $P^*\geq 2^{2R}-1$ and $P^*\geq \frac{\sigma^2}{8}$. Thus, $P^*>\frac{2^{2R}-1 +\frac{\sigma^2}{8}}{2}=\frac{2^{2R}-1}{2}+\frac{\sigma^2}{16}$. Thus, the lower bound
\begin{equation}
\avgcost_{\text{opt}}\geq k^2P^*\geq  k^2\left(\frac{2^{2R}-1}{2}+\frac{\sigma^2}{16}\right).
\end{equation}
The upper bound $\avgcost_{cancel}\leq k^2 \left(2^{2R}-1 + \sigma^2\right)$. Thus the ratio of upper and lower bounds is smaller than $16$. \\
\text{Case 4b}: If $P^*<\frac{\sigma^2}{8}$.

%
In this case,
\begin{eqnarray*}
\kappa &=& \frac{\sigma^2 2^{2R}}{\left(\sigma + \sqrt{P^*}\right)^2+1}\overset{R\geq 0}{\geq}  \frac{\sigma^2}{\left(\sigma + \sqrt{P^*}\right)^2+1}\\
&\overset{\sigma\leq 1, P^*\leq \frac{2^{2R}}{16} <\frac{\sqrt{2}}{16} }{\geq} & \frac{\sigma^2}{(1+\frac{2^{\frac{1}{4}}}{4})^2 +1}\approx \frac{\sigma^2}{2.6830}\geq \frac{\sigma^2}{2.69}.
\end{eqnarray*}
Thus the lower bound is larger than
\begin{eqnarray*}
\avgcost{}_{\text{opt}} &\geq &k^2 (2^{2R}-1) + \left(\left(    \sqrt{\frac{\sigma^2}{2.69}}  - \sqrt{P^*}    \right)^+ \right)^2\\
&\overset{(P^*\leq\frac{\sigma^2}{8})}{\geq}  & k^2(2^{2R}-1) + \left(\left(    \sqrt{\frac{\sigma^2}{2.69}}  - \sqrt{\frac{\sigma^2}{8}}   \right)^+ \right)^2\\
& \approx&  k^2(2^{2R}-1) + 0.0656\sigma^2\geq  k^2(2^{2R}-1) + 0.065\sigma^2.
\end{eqnarray*}
The upper bound is based on $DPC(\alpha_{Costa})$. Using~\eqref{eq:secondlowerbound}, an upper bound is 
\begin{eqnarray*}
\avgcost{}_{\text{opt}} &\leq & k^2(2^{2R}-1) + \frac{\sigma^2}{2^{4R}+ (2^{2R}-1)\sigma^2}\\
&\overset{R\geq 0}{\leq} & k^2 (2^{2R}-1) + \sigma^2.
\end{eqnarray*}
The ratio is smaller than $\frac{1}{0.065}\approx 15.39 <16$, and over the entire space, the ratio of asymptotic upper and lower bounds is smaller than $16$.

\bibliographystyle{IEEEtran}
\bibliography{IEEEabrv,MyMainBibliography}

\end{document}